\newtheorem{theorem}{Theorem}
\newtheorem{lemma}{Lemma}
\newtheorem{remark}{Remark}
\newtheorem{example}{Example}
\newcommand{\beq}{\begin{equation}}
\newcommand{\eeq}{\end{equation}}
\newcommand{\beqnn}{\begin{equation*}}
\newcommand{\eeqnn}{\end{equation*}}
\newcommand{\beqy}{\begin{eqnarray}}
\newcommand{\eeqy}{\end{eqnarray}}
\newcommand{\beqynn}{\begin{eqnarray*}}
\newcommand{\eeqynn}{\end{eqnarray*}}
\newcommand{\bit}{\begin{itemize}}
\newcommand{\eit}{\end{itemize}}
\newcommand{\ben}{\begin{enumerate}}
\newcommand{\een}{\end{enumerate}}
\newcommand{\bex}{\begin{example}}
\newcommand{\eex}{\end{example}}
\newcommand{\balg}[1]{\begin{algorithm} \caption{#1}}
\newcommand{\ealg}{\end{algorithm}}
\newcommand{\balgc}{\begin{algorithmic}[1]}
\newcommand{\ealgc}{\end{algorithmic}}
\newcommand{\bary}{\begin{array}}
\newcommand{\eary}{\end{array}}
\newcommand{\bmx}{\begin{bmatrix}}
\newcommand{\emx}{\end{bmatrix}}
\newcommand{\bsmx}{\left[\begin{smallmatrix}}
\newcommand{\esmx}{\end{smallmatrix}\right]}
\newcommand{\bmxc}[1]{\left[\begin{array}{@{}#1@{}}}
\newcommand{\emxc}{\end{array}\right]}
\newcommand{\bcn}{\begin{center}}
\newcommand{\ecn}{\end{center}}
\newcommand{\Rbb}{{\mathbb{R}}}
\newcommand{\Zbb}{{\mathbb{Z}}}
\newcommand{\Rnbn}{\Rbb^{n \times n}}
\newcommand{\Rmbn}{\Rbb^{m \times n}}
\newcommand{\Zn}{\Zbb^{n}}
\newcommand{\sOSIC}{{\scriptscriptstyle \text{OSIC}}}
\newcommand{\sBSIC}{{\scriptscriptstyle \text{BSIC}}}
\newcommand{\A}{\boldsymbol{A}}
\newcommand{\I}{\boldsymbol{I}}
\newcommand{\Q}{\boldsymbol{Q}}
\newcommand{\R}{\boldsymbol{R}}
\newcommand{\e}{\boldsymbol{e}}
\renewcommand{\l}{\boldsymbol{\ell}}
\renewcommand{\u}{\boldsymbol{u}}
\renewcommand{\v}{\boldsymbol{v}}
\newcommand{\x}{{\boldsymbol{x}}}
\newcommand{\y}{{\boldsymbol{y}}}
\newcommand{\0}{{\boldsymbol{0}}}
\newcommand{\bbR}{{\bar{\R}}}
\newcommand{\hx}{{\hat{x}}}
\newcommand{\hbx}{{\hat{\x}}}
\newcommand{\iid}{i.i.d.\xspace}
\begin{document}
%
\title{Closed-Form Word Error Rate Analysis for Successive Interference Cancellation Decoders}

 \author{Jinming~Wen, Keyu Wu, \\  Chintha Tellambura, \IEEEmembership{Fellow, IEEE},  and Pingzhi Fan, \IEEEmembership{Fellow, IEEE}
 \thanks{This work was presented in part at the 2017 IEEE International Conference on Communications (ICC), Paris, France.}
\thanks{J.~Wen is with  the College of Information Science and Technology, and the College of Cyber Security, Jinan University, Guangzhou 510632, China (e-mail: jinming.wen@mail.mcgill.ca).}
\thanks{K.~Wu is with the School of Electronic Science,
	National University of Defense Technology, Changsha 410073, China (e-mail: keyuwu@nudt.edu.cn).}
 \thanks{C.~Tellambura is with  the Department of Electrical and Computer Engineering, University of Alberta, Edmonton T6G 2V4, Canada (e-mail: chintha@ece.ualberta.ca)}
 \thanks{P.~Fan is  with the Institute of Mobile Communications, Southwest
Jiaotong University, Chengdu 610031, China (e-mail: p.fan@ieee.org)}

\thanks{The work of Pingzhi Fan was supported by NSFC-NRF project under grant No.61661146003/NRF2016NRF-NSFC001-089, and the 111 Project under Grant 111-2-14.
This work was also partially supported by National Natural Science Foundation of China (No. 11871248),
``the Fundamental Research Funds for the Central Universities'' (No. 21618329)
and the postdoc research fellowship from Fonds de Recherche Nature et Technologies.}

}

\maketitle

\begin{abstract}
We consider the estimation of an integer  vector $\hbx\in \mathbb{Z}^n$ from the linear observation
$\y=\A\hbx+\v$, where $\A\in\mathbb{R}^{m\times n}$ is a random matrix with independent and identically distributed (i.i.d.) standard Gaussian  $\mathcal{N}(0,1)$ entries, and $\v\in \mathbb{R}^m$ is a noise vector with  i.i.d. $\mathcal{N}(0,\sigma^2 )$ entries with given $\sigma$.  In digital communications, $\hbx$ is typically  uniformly distributed over an  $n$-dimensional  box $\mathcal{B}$. For this estimation problem, successive interference cancellation (SIC) decoders are popular   due to their  low  complexity, and a detailed analysis of  their word error rates (WERs) is highly useful.  In this paper, we derive closed-form WER  expressions for two  cases: (1)  $\hbx\in \mathbb{Z}^n$  is fixed and (2)
$\hbx$ is uniformly distributed over $\mathcal{B}$. We also investigate some of their properties in detail and show that they agree closely with   simulated word error probabilities.
\end{abstract}

\begin{IEEEkeywords}
Word error rate, successive interference cancellation,  Babai's nearest plane algorithm,
integer least-squares problems.
\end{IEEEkeywords}

\section{Introduction}
\label{s:introduction}

\subsection{Motivation}
\IEEEPARstart{I}{nteger}  parameter estimation \cite{HasB98} in linear models finds many applications such as Global Positioning System (GPS), cryptography, digital communications, code division multiple access  and others.  The prototype problem is to estimate (detect)  an integer  vector $\hbx\in\mathbb{Z}^n$ from the  linear model:
\begin{equation}
\label{e:model}
\y=\A\hbx+\v,  \quad \v \sim \mathcal{N}(\0,\sigma^2\I),
\end{equation}
where $\y\in \mathbb{R}^m$ is an observation vector,
$\A\in\mathbb{R}^{m\times n}$ is a random matrix with \iid  standard Gaussian $\mathcal{N}(0,1)$ entries
and $\v\in \mathbb{R}^m$ is a Gaussian  noise vector
$\mathcal{N}(\boldsymbol{0},\sigma^2 \I)$ with variance  $\sigma^2$ of each entry.

The maximum-likelihood (ML)  estimator of   $\hbx$ is the solution of a simple least-squares problem if the integer constraint is relaxed (e.g., $\hbx \in \mathbb R^n $). However, such relaxation is not highly accurate. Thus, the exact ML estimator of $ \hbx $ is given by the solution of the  following integer least-squares (ILS) problem \cite{HasB98} \cite{HasV05}:
\beq
\label{e:ILS}
\min_{\x\in{\mathbb{Z}}^n}\|\y-\A\x\|_2.
\eeq
Because solving  \eqref{e:ILS} is equivalent to finding the  closest point to $\y$
in the lattice $\{\A\x: \x\in \Zn\}$,
problem \eqref{e:ILS} is also referred to as the  closest-point problem in cryptography  \cite{AgrEVZ02}. In terms of complexity, this problem is Non-deterministic Polynomial (NP)-hard.

In digital  communication links, prior to transmission,    data bits are mapped to  a fixed set of modulation symbols (signal constellation). For example, Section IV discusses  $M$-ary pulse amplitude modulation (PAM) constellation, which consists of $M$ integers. Thus, with  $M$-ary PAM,     the entries of $\hbx $ are selected from the  fixed constellation of integers.
The signal constellations are also subject to the average power constraints. Thus,  the parameter vector $\hbx$  satisfies a
box constraint \cite{DamGC03, SuW05, ChaH08, BreC11, ParKL16}, i.e.,
\beq
\label{e:box}
\hbx \in {\cal B}:=\{ \x: \l \leq \x \leq \u, \ \x,  \l, \u\in \Zn\}.
\eeq
In practical systems,  all  signal constellation points  are equally likely, which  is equivalent to  $\hbx$ being  uniformly distributed over $\mathcal{B}$,
see, e.g., \cite{JalO05, WenC17}. Thus, the box constraint (3) can be incorporated in  \eqref{e:ILS}, which yields the so-called
 box-constrained integer least-squares (BILS) problem:
\beq
\label{e:BILS}
\min_{\x\in\mathcal{B}}\|\y-\A\x\|_2.
\eeq

Problems \eqref{e:ILS} and \eqref{e:BILS} can be optimally  solved by a sphere decoder
(see  \cite{HasV05} and \cite{ChaH08}), which  consists of pre-processing and search stages.
For example, one can pre-process matrix $ \A$ by using the Lenstra-Lenstra-Lov\'{a}sz (LLL) algorithm  \cite{LenLL82}, which reduces $ \A$ to a nearly orthogonal lattice basis, which improves the efficiency of the search stage.  Other pre-processing strategies include  Vertical-Bell Labs layered Space Time algorithm (V-BLAST) \cite{FosGVW99},  Sorted QR Decomposition (SQRD) \cite{WubBRKK01}
and their variants \cite{SuW05, ChaH08, BreC11}.
Perhaps, the most  frequently utilized discrete search algorithms for   \eqref{e:ILS} or \eqref{e:BILS} are  the Schnorr-Euchner search algorithm  \cite{SchE94}  and  its variants
 \cite{AgrEVZ02,DamGC03, CuiT05,CuiT05b, GhaT08, CuiHT13, WenZMC16}.

It  has respectively been shown in \cite{Mic01} and \cite{JalO05} that
\eqref{e:ILS} and  \eqref{e:BILS} are NP-hard problems;
hence, for many applications,  suboptimal algorithms are common.
A  popular  one  for solving  \eqref{e:ILS} is  the
ordinary successive interference cancellation
(OSIC) decoder,  which is actually Babai's  nearest plane algorithm \cite{Bab86}.
It  can also be adapted to form  a box-constrained SIC (BSIC) decoder,  a suboptimal algorithm for \eqref{e:BILS}. Interestingly, since the Schnorr-Euchner algorithm is a depth-first search, the first valid solution found by it, is in fact the   OSIC  decoder solution, also called    Babai point \cite{AgrEVZ02, ChaWX13}. Similarly, the initial solution of
the Schnorr-Euchner decoder of  \eqref{e:BILS} is  the BSIC decoder solution, which is
a box-constrained Babai point \cite{AgrEVZ02, DamGC03, ChaH08, WenC17}.

Analyzing the performance of decoders helps to design and  characterize wireless communication links \cite{PraV04,ZanCW05, LiuK11, MenK15,SonYCA15, QianM17,BaoMKXZ17}. The most common  decoder performance measures  involve the error probability of the decoding process. Specifically,   we utilize the error probability that the output of the decoder is not  equal to the true integer vector $\hbx$, which is called  word error rate (WER).
The probability of correct detection is called the success probability  \cite{HasB98,ChaWX13, WenTB16, WenC17}.

The WER characterization of both OSIC and BSIC decoders is useful  \cite{ChaWX13,WenC17}.
Indeed, with  OSIC decoder solving  \eqref{e:ILS}
or a BSIC decoder solving  \eqref{e:BILS},
their WERs, respectively denoted by, $P^{\sOSIC}_e$ and $P^{\sBSIC}_e$,
serve as critical quality parameters. For instance, a suitable threshold can be setup a priori -- if the WER is below is  threshold -- to indicate that
the decoder can be used with confidence.
In this case, the additional effort of optimally solving  the ILS \eqref{e:ILS} or the BILS \eqref{e:BILS} yields diminishing returns.
However, if $P^{\sOSIC}_e$ or $P^{\sBSIC}_e$  is  above the threshold, then more  accurate  decoders, such as
a sphere decoder (ML  estimator), should be used.
Even if one intends to solve the ILS \eqref{e:ILS} or the BILS \eqref{e:BILS} for ML estimator of $\hbx$, it is still of vital importance to compute $P^{\sOSIC}_e$ or $P^{\sBSIC}_e$ since they are often used to approximate their WER.

\subsection{Contributions}
Closed-form expressions for $P^{\sOSIC}_e$ and $P^{\sBSIC}_e$
have respectively been given in \cite{ChaWX13} and \cite{WenC17} when $\A$ in \eqref{e:model} is
deterministic.
Moreover, closed-form WER  expressions for zero-forcing  and BSIC decoders have been derived for when  $\hat{\x}$ is a  fixed integer vector  and for when  $\hat{\x}$ is  uniformly distributed over $\mathcal{B}$
for deterministic $\A$~\cite{WenCT17}. The relationship between  WERs of zero-forcing and BSIC decoders
was  also investigated in \cite{WenCT17}.
However, all of these formulas are for deterministic $\A$.
To the best of our  knowledge, for random  $\A$,   the WER analysis for SIC decoders has been lacking.
This paper fills this gap and  derives closed-form WER  expressions for both OSIC and BSIC cases.
Specifically, the contributions can be summarized as follows:
\begin{itemize}
\item We derive a closed-form WER  expression $P^{\sOSIC}_e$ for the  SIC decoder
when  $\hbx$ is  a fixed integer vector,
and investigate some of the  properties of $P^{\sOSIC}_e$.  In particular, we rigorously show that $P^{\sOSIC}_e$
tends to 0 when $\sigma^2$, which is the noise variance, tends to 0,
and quantify the gap of $P^{\sOSIC}_e$ for two sizes $n_1$ and $n_2$
(Section \ref{s:OILS})\footnote{This paper   was presented in part at  2017 IEEE International Conference on Communications (ICC)~\cite{WenWT17}.}.

\item We derive a  closed-form WER  expression $P^{\sBSIC}_e$ for  BSIC decoder when  $\hbx$ is
uniformly distributed over $\mathcal{B}$, and investigate some of its properties.
In particular, we rigorously show that $P^{\sBSIC}_e$
tends to 0 when $\sigma$ tends to 0,
and quantify the gap of $P^{\sBSIC}_e$ for two sizes $n_1$ and $n_2$ (Section \ref{s:BILS}).
\item We study the relationship between $P^{\sOSIC}_e$ and $P^{\sBSIC}_e$.
More precisely, we show that $P^{\sBSIC}_e\leq P^{\sOSIC}_e$
and they converge to one value as  noise variance $\sigma^2$ tends to 0 (Section \ref{s:relation}).
\end{itemize}

\subsection{Comparison with existing work}
Many works have  theoretically analyzed the performance of some commonly used decoders \cite{PraV04,ZanCW05, LiuK11}.
Although our  closed-form WER analysis   has some connections with those in
\cite{PraV04,ZanCW05, LiuK11}, there are main differences between them.
More specifically:
\begin{enumerate}
\item  Our closed-form expressions (see eq. \eqref{e:pb} and eq. \eqref{e:pbbox} in Sec.
\ref{s:OILS} and \ref{s:BILS}) for the WER of OSIC and BSIC decoders are simpler
and more concise than \cite[Theorem 1]{PraV04} (note that \cite[eq. (14)]{PraV04} is more complicated than eq. \eqref{e:R}), \cite[eq. (18)]{ZanCW05} and \cite[Theorem 1]{LiuK11}.
Because of this  simplicity, we can theoretically characterize the gap
of the WER corresponding to two different dimensions of $\A$
(Theorems \ref{t:propOSIC2} and \ref{t:propBSIC2}).
However, we do not find similar results in \cite{PraV04,ZanCW05, LiuK11}.
\item Another common difference between this paper and \cite{PraV04,ZanCW05, LiuK11} is the techniques for the
WER analysis. Our main techniques for the WER analysis are the distribution of the triangular factor of the QR factorization of the random matrix $\A$, chain rule, random available transformation and the computational formulas of OSIC and BSIC which are simple and clear.
The main techniques of the joint error probability analysis in \cite{PraV04} are the distribution of the triangular factor of the QR factorization of the random matrix $\A$, chain rule and a result from \cite{SimA00}.       Reference
\cite{ZanCW05} mainly uses the total probability theorem and some approximation techniques.
The main technique of \cite{LiuK11} is based on some analysis on n-PSK modulation.
There are also some other differences between them, outlined below:

\item Another difference between this paper and \cite{PraV04} is that
our WER analysis is valid for any box $\mathcal{B}$, while \cite{PraV04} assumes that $\mathcal{B}$
is a cube with the edge length $2^{2z}$, where $z$ is a positive integer. Since in some applications, such as when the constellations are 4-QAM,  the edge length of $\mathcal{B}$ does not satisfy $2^{2z}$ for a positive integer $z$, and the analysis of WER over an arbitrary  box $\mathcal{B}$ is still needed.

\item Different from our paper which analyzes the WER of OSIC and BSIC decoders,
\cite{ZanCW05} investigates the bit error rates of both minimum mean square error (MMSE)-non-SIC
and MMSE-SIC.
From eq.\eqref{e:OB}-eq.\eqref{e:BB} and  \cite[eq.s (2-7)]{ZanCW05},
we can see that these two papers study the error performance of different decoders.

\item There are three additional differences between this paper and \cite{LiuK11}:
firstly, our analysis is valid for any box $\mathcal{B}$, which is different from \cite{LiuK11} that
assumes $\mathcal{B}$ is transformed from $n$-PSK modulators.
Secondly, the WER in this paper refers to the probability that a decoder does not successfully detect
$\hbx$, which is different from the symbol error probability in \cite{LiuK11} (please see
\cite[eq.s (18) and (22)]{LiuK11}).
Thirdly, we give closed-form expressions for the \emph{exact}
WER of OSIC and BSIC decoders, whereas \cite{LiuK11} proposes an  \emph{approximation}
of the symbol error probability of  multiple-input and multiple-output-MMSE-SIC decoders.
\end{enumerate}

The rest of the paper is organized as follows.
In Section \ref{s:QR}, we introduce the computational details of OSIC and BSIC decoders.
In Section \ref{s:OILS}, we develop a closed-form expression for $P^{\sOSIC}_e$  and investigate its properties.
In Section \ref{s:BILS}, we develop  closed-form  $P^{\sBSIC}_e$
and  study its properties.
The relationship between $P^{\sOSIC}_e$ and $P^{\sBSIC}_e$ is analyzed in Section \ref{s:relation}.
Numerical simulations to verify the  derived formulas are presented  in Section \ref{s:Sim}.
Finally, we summarize and discuss our results  in Section \ref{s:sum}.

\textbf{Notation:} For  a vector $\x$,
 $\lfloor \x\rceil$  denotes its nearest integer vector,
i.e., each entry of $\x$ is rounded to its nearest integer
(if there is a tie, rounding is downward),
and  $x_i$ denotes the $i$-th element of $\x$.
Let $a_{ij}$ be the element of matrix $\A$ at row  $i$ and column  $j$.
Let $P^{\sOSIC}_e$ and $P^{\sBSIC}_e$ respectively denote the WER of
the SIC and BSIC decoders

\section{OSIC and BSIC decoders}
\label{s:QR}
In this section, we briefly   introduce the computational details  of OSIC and BSIC decoders.

Suppose that $\A$ in \eqref{e:model} has the following thin QR factorization \cite[p.230]{GolV13}:
\beq
\label{e:qr}
\A=\Q\R,
\eeq
where  $\Q\in \Rmbn$ is an orthonormal matrix and $\R\in \Rnbn$ is an upper triangular matrix.
Let $\bar{\y}=\Q^T\y$ and $\bar{\v}=\Q^T\v$.
Since $\v \sim \mathcal{N}(\0,\sigma^2\I)$,
$\bar{\v} \sim \mathcal{N}(\0,\sigma^2\I)$. By \eqref{e:qr}, eq. \eqref{e:model} can be transformed to
\beq
\label{e:modeltransf}
\bar{\y}=\R\hbx+\bar{\v}, \quad \bar{\v} \sim \mathcal{N}(\0,\sigma^2\I).
\eeq

The output of the OSIC decoder  $\x^\sOSIC\in \Zbb^n$ is  computed as follows \cite{Bab86}:
\beq
\label{e:OB}
 c_{i}^\sOSIC=(\bar{y}_{i}-\sum_{j=i+1}^nr_{ij}x_j^\sOSIC)/r_{ii},\quad
 x_i^\sOSIC=\lfloor c_i^\sOSIC\rceil
\eeq
for $i=n, n-1, \ldots, 1$, where $\sum_{n+1}^n r_{nj}x_j^\sOSIC=0$.

By modifying the Babai nearest plane algorithm  \cite{Bab86} with taking
the constrained box into account, one can get a BSIC decoder (see, e.g., \cite{WenC17}).
The output of BSIC decoder   $\x^\sBSIC\in \mathcal{B}$ can be computed via
\beq
\label{e:BB}
\begin{split}
 c_{i}^\sBSIC&=(\bar{y}_{i}-\sum_{j=i+1}^nr_{ij}x_j^\sBSIC)/r_{ii}, \\ \ \
 x_i^\sBSIC&=
\begin{cases}
\ell_i, & \mbox{ if }\   \lfloor c_i^\sBSIC\rceil\leq \ell_i\\
\lfloor c_i^\sBSIC\rceil, & \mbox{ if }\    \ell_i<\lfloor c_i^\sBSIC\rceil< u_i\\
u_i, & \mbox{ if }\    \lfloor c_i^\sBSIC\rceil \geq u_i
\end{cases}
\end{split}
\eeq
for $i=n, n-1, \ldots, 1$, where $\sum_{n+1}^nr_{nj}x_j^\sBSIC =0$.

\section{WER for OSIC Decoders}
\label{s:OILS}

In this section, we derive  closed-form  $P^{\sOSIC}_e$  and investigate its properties.

\subsection{WER for OSIC Decoders}

This subsection derives the $P^{\sOSIC}_e$ expression.
To this end, we introduce two lemmas which are
	needed for the one dimensional case and for characterizing the distribution
of the entries of $\R$ in \eqref{e:qr}. We begin by  introducing the first lemma.

\begin{lemma}
\label{l:probk} Consider the  following scalar  linear model:
\beq
\label{e:modelk}
\bar{y}=r\hat{x}+\bar{v}, \quad \bar{v} \sim \mathcal{N}(0,\sigma^2),
\eeq
where $\hat{x}\in \mathbb{Z}$ is a fixed unknown parameter number,
$\bar{v}\in \mathbb{R}$ is a  Gaussian $\mathcal{N}(0,\sigma^2)$ noise term,
and $r^2>0$, which is independent with $\bar{v}$, is a chi-square  $\chi^2_{k}$ random variable with $ k>0$ degrees of freedom.
Let $x=\lfloor\bar{y}/r\rceil$, then
\beq
\label{e:R}
P_k= \Pr(x=\hat{x}) = C_k \int_0^{\arctan(1/(2\sigma))}\cos^{k-1}(\theta)d\theta
\eeq
where
\beq
\label{e:Ck}
C_k=\frac{2\Gamma((k+1)/2)}{\sqrt{\pi}\Gamma(k/2)}.
\eeq
\end{lemma}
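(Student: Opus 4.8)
The plan is to translate the rounding event $\{x=\hat{x}\}$ into a simple geometric condition on the pair $(\bar v, r)$ and then evaluate the resulting two-dimensional integral by passing to polar coordinates. Since $x=\lfloor\bar{y}/r\rceil$ and $\bar{y}=r\hat{x}+\bar{v}$, we have $\bar{y}/r=\hat{x}+\bar{v}/r$, so up to the probability-zero tie set the event $x=\hat{x}$ is exactly $|\bar{v}/r|\le 1/2$. Because only $r^2$ is specified distributionally and $\bar{v}$ is symmetric, this condition depends only on $|r|=\sqrt{r^2}$; hence I may assume $r>0$ and write the event as $|\bar{v}|\le r/2$.

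Next I would write $P_k=\Pr(|\bar{v}|\le r/2)$ as a double integral against the joint density of $(\bar{v},r)$, using their independence. Transforming the $\chi^2_k$ density of $r^2$ through $r=\sqrt{r^2}$ gives the density $f_r(r)=\frac{1}{2^{k/2-1}\Gamma(k/2)}r^{k-1}e^{-r^2/2}$ for $r>0$, while $\bar{v}$ carries the usual $\mathcal{N}(0,\sigma^2)$ density. Using the symmetry of $\bar v$ to restrict to $\bar v\ge 0$ (a factor of $2$) and rescaling $w=\bar{v}/\sigma$ so that the $\sigma$ cancels the Gaussian normalization, the probability becomes
\[
P_k=\frac{2}{\sqrt{2\pi}\,2^{k/2-1}\Gamma(k/2)}\int_0^\infty\!\int_0^{r/(2\sigma)} r^{k-1}e^{-(r^2+w^2)/2}\,dw\,dr.
\]

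The key step is the change to polar coordinates $r=\rho\cos\theta$, $w=\rho\sin\theta$, whose Jacobian is $\rho$. The constraint $0\le w\le r/(2\sigma)$ becomes $0\le\theta\le\arctan(1/(2\sigma))$, the combined Gaussian weight collapses to the rotation-invariant $e^{-\rho^2/2}$, and $r^{k-1}\,dw\,dr=\rho^k\cos^{k-1}\theta\,d\rho\,d\theta$. The integral therefore factors into the angular piece $\int_0^{\arctan(1/(2\sigma))}\cos^{k-1}\theta\,d\theta$, which is precisely the integral appearing in \eqref{e:R}, times the radial piece $\int_0^\infty\rho^k e^{-\rho^2/2}\,d\rho$. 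Evaluating the latter via $u=\rho^2/2$ yields $2^{(k-1)/2}\Gamma((k+1)/2)$, and combining this with the prefactor — in which the powers of $2$ cancel down to $\sqrt{2}$ — produces exactly the constant $C_k=2\Gamma((k+1)/2)/(\sqrt{\pi}\,\Gamma(k/2))$ of \eqref{e:Ck}.

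I expect no serious obstacle once the polar substitution is in place: the entire difficulty is recognizing that the joint weight $e^{-(r^2+w^2)/2}$ is rotationally invariant, so that the $\cos^{k-1}\theta$ factor separates cleanly from the radial Gamma integral. The only points requiring genuine care are the reduction to $|r|$ — handled by the symmetry of $\bar v$ together with the fact that only $r^2$ is distributionally specified — and the bookkeeping of the three normalizing constants, namely the $\chi^2_k$ normalization, the Gaussian $1/\sqrt{2\pi}$, and the symmetry factor $2$, which must combine with the radial integral to deliver precisely $C_k$.
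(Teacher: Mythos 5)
Your proof is correct, but it follows a genuinely different route from the paper's. Both arguments begin identically, reducing the event $\{x=\hat{x}\}$ to $|\bar{v}/r|\le 1/2$ (your remark that this depends only on $|r|$, modulo the measure-zero tie set, is the right way to handle the sign ambiguity in $r$). From there the paper sets $U=\bar{v}^2/r^2$ and observes that $U=\frac{\sigma^2}{k}F_{1,k}$ is a scaled $F$-distributed variable; it then imports the known density of $F_{1,k}$ from a standard reference and evaluates $\Pr(U\le 1/4)$ via the one-dimensional substitution $x=k\tan^2\theta$, which directly yields $C_k\int_0^{\arctan(1/(2\sigma))}\cos^{k-1}\theta\,d\theta$. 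You instead work with the joint density of $(\bar{v},r)$ — using the correct $\chi$-density $f_r(r)=\frac{1}{2^{k/2-1}\Gamma(k/2)}r^{k-1}e^{-r^2/2}$ — and pass to polar coordinates, exploiting the rotational invariance of $e^{-(r^2+w^2)/2}$ to split off the angular factor $\int_0^{\arctan(1/(2\sigma))}\cos^{k-1}\theta\,d\theta$ from the radial integral $\int_0^\infty\rho^k e^{-\rho^2/2}\,d\rho=2^{(k-1)/2}\Gamma((k+1)/2)$; I verified that your constants combine to exactly $C_k=\frac{2\Gamma((k+1)/2)}{\sqrt{\pi}\,\Gamma(k/2)}$. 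In effect you re-derive from scratch the $F_{1,k}$ tail computation that the paper cites: your polar substitution and the paper's $x=k\tan^2\theta$ are the same change of variables seen in two dimensions versus one. What your approach buys is self-containedness and transparency — the appearance of $\arctan(1/(2\sigma))$ as a genuine polar angle is geometrically illuminating, and no external distributional fact is needed; what the paper's approach buys is brevity, since the joint-to-ratio transformation is outsourced to a standard reference on the $F$ distribution.
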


\begin{proof}
See Appendix \ref{s:probkproof}.
\end{proof}

To derive the main theorem for $P^{\sOSIC}_e$,
we  introduce the following lemma  from \cite[P. 99]{Mui82}.

\begin{lemma}
\label{l:distribution}
Let the  entries of $\A   \in \mathbb{R}^{m \times n} $ be \iid
Gaussian  $\mathcal{N}(0,1)$ terms.
Then all $r_{ij}, 1\leq i\leq j\leq n$, are independent.
Moreover, $r_{ii}^2\sim\chi^2_{m-i+1}$ and $r_{ij}\sim \mathcal{N}(0,1)$
for $1\leq i < j\leq n$.
\end{lemma}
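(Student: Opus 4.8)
The statement is a classical fact about the QR (equivalently Gram--Schmidt) factorization of a Gaussian matrix, so the plan is to reconstruct it directly from the rotational invariance of the standard normal distribution rather than merely invoke \cite{Mui82}. Write $\A=[\a_1,\dots,\a_n]$ in terms of its columns; since the entries are \iid $\mathcal{N}(0,1)$, the columns are independent $\mathcal{N}(\0,\I_m)$ vectors, and almost surely $\A$ has full column rank, so the thin QR factorization with positive diagonal is well defined. Running classical Gram--Schmidt on the columns yields the explicit formulas $r_{ij}=\q_i^T\a_j$ for $i<j$ and $r_{jj}=\norm{\a_j-\sum_{i<j}r_{ij}\q_i}$, where $\q_1,\dots,\q_{j-1}$ is the orthonormal frame produced from $\a_1,\dots,\a_{j-1}$. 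The key observation is that this frame is a function of the \emph{earlier} columns only, hence is independent of $\a_j$.

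First I would process the matrix column by column and condition on the previously generated frame $\q_1,\dots,\q_{j-1}$. Given this frame, $\a_j\sim\mathcal{N}(\0,\I_m)$ is unchanged (by independence), and I would complete $\{\q_1,\dots,\q_{j-1}\}$ to an orthonormal basis of $\Rm$ by a fixed measurable choice of $m-j+1$ further vectors spanning the orthogonal complement. Expressing $\a_j$ in this basis amounts to an orthogonal change of variables applied to a standard Gaussian vector, so by rotational invariance the resulting $m$ coordinates are again \iid $\mathcal{N}(0,1)$. The first $j-1$ of them are exactly $r_{1j},\dots,r_{j-1,j}$, which are therefore \iid $\mathcal{N}(0,1)$; and $r_{jj}^2=\norm{\a_j-\sum_{i<j}r_{ij}\q_i}^2$ is the squared norm of the projection of $\a_j$ onto the $(m-j+1)$-dimensional orthogonal complement, i.e.\ the sum of squares of the remaining $m-j+1$ coordinates, giving $r_{jj}^2\sim\chi^2_{m-j+1}$.

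Because the coordinates in the rotated basis are mutually independent, within column $j$ the quantities $r_{1j},\dots,r_{j-1,j}$ and $r_{jj}$ are independent, and all of this holds \emph{conditionally} on the frame with a law that does not depend on the frame. I would then use this frame-independence of the conditional distribution to remove the conditioning: the column-$j$ entries are unconditionally distributed as claimed and are jointly independent of $(\a_1,\dots,\a_{j-1})$, hence of all entries $r_{ik}$ with $k<j$. Iterating over $j=1,\dots,n$ and combining the per-column independence statements yields mutual independence of all $r_{ij}$, $1\le i\le j\le n$.

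The main obstacle is the careful bookkeeping of the conditioning argument: one must verify that the conditional law of the column-$j$ entries given the earlier frame is genuinely free of that frame (so that the standard fact ``a conditional distribution independent of the conditioning variable yields unconditional independence'' applies), and that the measurable completion of the frame to a full orthonormal basis introduces no hidden dependence. Once rotational invariance is invoked at each step the computation is routine, but phrasing it cleanly for all columns simultaneously, rather than one column at a time, is where the proof must be handled with some care.
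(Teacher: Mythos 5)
Your proof is correct, but it is not the paper's route: the paper gives no proof of this lemma at all, simply importing it from \cite[p.~99]{Mui82}, where it is the classical Bartlett decomposition established by a change-of-variables (Jacobian) computation that factorizes the joint density of the entries of $\R$. Your argument instead reconstructs the result probabilistically: sequential Gram--Schmidt, conditioning on $\mathcal{F}_{j-1}=\sigma(\a_1,\ldots,\a_{j-1})$, and rotational invariance of $\mathcal{N}(\0,\I_m)$ under the $\mathcal{F}_{j-1}$-measurable orthogonal change of basis. The bookkeeping you flag as the main obstacle is handled by one standard fact: if the conditional law of the column-$j$ quantities $(r_{1j},\ldots,r_{j-1,j},r_{jj})$ given $\mathcal{F}_{j-1}$ is a \emph{fixed} product measure $\mu_j$ (here $\mathcal{N}(0,1)^{\otimes(j-1)}$ times the law of a $\chi_{m-j+1}$ variable, since the residual lives in the $(m-j+1)$-dimensional orthogonal complement of $\mathrm{span}\{\q_1,\ldots,\q_{j-1}\}$), then this block is independent of $\mathcal{F}_{j-1}$, hence of all $r_{ik}$ with $k<j$; induction over $j$ then yields the full product law, so there is no need to treat all columns simultaneously. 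The measurable completion of the frame is likewise unproblematic (e.g., Gram--Schmidt applied to the frame augmented by standard basis vectors), and $m\geq n$ --- implicit in the paper's thin QR with $\R\in\Rnbn$ --- guarantees a.s.\ full column rank. What your approach buys is a self-contained, elementary derivation that also transfers immediately to the complex Gaussian case the paper invokes without proof in its summary (yielding $r_{ii}^2\sim\chi^2_{2(m-i+1)}$); what the paper's citation buys is brevity and a direct link to Wishart theory, at the cost of importing density-level machinery.
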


Based on Lemmas \ref{l:probk} and \ref{l:distribution}, the following theorem
for $P_e^{\sOSIC}$ can be obtained.

\begin{theorem}
\label{t:prob}
The word error rate $P_e^{\sOSIC}$ of OSIC decoder (see \eqref{e:OB}) satisfies
\beq
\label{e:pb}
P_e^{\sOSIC}\equiv\Pr(\x^{\sOSIC}\neq\hbx)=1-\prod_{i=1}^n P_{m-i+1},
\eeq
where $P_i$ is defined in \eqref{e:R}.
\end{theorem}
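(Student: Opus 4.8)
The plan is to compute the success probability $\Pr(\x^{\sOSIC}=\hbx)$, the complement of $P_e^{\sOSIC}$, and to show it factors as $\prod_{i=1}^n P_{m-i+1}$ by reducing each stage of the recursion \eqref{e:OB} to the scalar model of Lemma \ref{l:probk}. First I would unwind \eqref{e:OB} inside the transformed model \eqref{e:modeltransf}. Since $\R$ is upper triangular, $\bar{y}_i=\sum_{j=i}^n r_{ij}\hat{x}_j+\bar{v}_i$. Substituting this into \eqref{e:OB}, if the decoder has already recovered all higher-index entries correctly, i.e. $x_j^{\sOSIC}=\hat{x}_j$ for every $j>i$, then the off-diagonal terms cancel telescopically and $c_i^{\sOSIC}=\hat{x}_i+\bar{v}_i/r_{ii}$. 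Hence, on that event, $x_i^{\sOSIC}=\hat{x}_i$ holds exactly when $\bar{v}_i/r_{ii}\in[-1/2,1/2)$ (using the downward tie-breaking convention). Writing $A_i$ for this last event, a downward induction on $i=n,n-1,\ldots,1$ shows that the global success event equals $\bigcap_{i=1}^n A_i$: if every $A_i$ holds then every stage decodes correctly, and conversely correct decoding at all stages forces each $A_i$.

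The second ingredient is mutual independence of the events $A_i$, which each depend only on the pair $(\bar{v}_i,r_{ii})$. Lemma \ref{l:distribution} already supplies that $r_{11},\ldots,r_{nn}$ are mutually independent with $r_{ii}^2\sim\chi^2_{m-i+1}$. The remaining and most delicate point is that the transformed noise $\bar{\v}=\Q^T\v$ is independent of $\R$. The clean way to see this is to condition on $\A$ (equivalently on the factor pair $(\Q,\R)$): because $\v$ is independent of $\A$, the conditional law satisfies $\bar{\v}\mid\A\sim\mathcal{N}(\0,\sigma^2\Q^T\Q)=\mathcal{N}(\0,\sigma^2\I)$, and since this conditional distribution does not depend on $\A$, the vector $\bar{\v}$ is independent of $\A$, hence of $\R$, and in particular has i.i.d.\ $\mathcal{N}(0,\sigma^2)$ entries. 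Combining these facts, the pairs $(\bar{v}_1,r_{11}),\ldots,(\bar{v}_n,r_{nn})$ are mutually independent, and therefore so are the events $A_i$.

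Finally I would invoke Lemma \ref{l:probk}: the event $A_i=\{\bar{v}_i/r_{ii}\in[-1/2,1/2)\}$ is precisely the correct-detection event in the scalar model \eqref{e:modelk} with $r^2=r_{ii}^2\sim\chi^2_{m-i+1}$, so $\Pr(A_i)=P_{m-i+1}$. By mutual independence,
\[
\Pr(\x^{\sOSIC}=\hbx)=\Pr\Big(\bigcap_{i=1}^n A_i\Big)=\prod_{i=1}^n\Pr(A_i)=\prod_{i=1}^n P_{m-i+1},
\]
and \eqref{e:pb} follows by taking complements. (Equivalently, the same factorization can be phrased through the chain rule, where each conditional success probability given the higher-index stages equals $P_{m-i+1}$.)

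I expect the independence of $\bar{\v}$ and $\R$ to be the main obstacle: it is the one step where the randomness of $\A$ interacts with the noise, and without it the per-stage events would not decouple. Once that independence is established, the telescoping structure of the recursion together with Lemma \ref{l:probk} makes the remaining computation routine.
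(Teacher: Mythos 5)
Your proof is correct and follows essentially the same route as the paper's: the paper phrases the stage-by-stage reduction to Lemma \ref{l:probk} (with $r_{ii}^2\sim\chi^2_{m-i+1}$ from Lemma \ref{l:distribution}) through the chain rule of conditional probabilities, while you phrase it as an intersection of independent per-stage events $A_i$ and, usefully, make explicit the independence of $\bar{\v}=\Q^T\v$ from $\R$ via conditioning on $\A$, a point the paper uses only implicitly. The one quibble is the tie-breaking interval: with downward rounding of ties the correct-detection event is $\bar{v}_i/r_{ii}\in(-1/2,1/2]$ rather than $[-1/2,1/2)$, but since this ratio has a continuous distribution the boundary has probability zero and the computation is unaffected.
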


To prove Theorem \ref{t:prob}, we first use the chain rule of conditional
probabilities to transform $1-P_e^{\sOSIC}$ to the product of $n$ terms with each of them
representing a one-dimensional conditional success probability.
We use Lemma \ref{l:probk} to compute each term and finally obtain \eqref{e:pb}.
The detail is in the proof below.

\begin{proof}
Let
\[
P_s^{\sOSIC}=\Pr(\x^{\sOSIC}=\hbx)=1-P_e^{\sOSIC},
\]
then by the chain rule of conditional probabilities, we have
\begin{align*}
P_s^{\sOSIC} &  =\Pr\left(\bigcap_{i=1}^n(x_i^\sOSIC=\hx_i)\right) =\Pr(x_n^\sOSIC=\hx_{n}) \nonumber \\
& \quad\times \prod_{i=1}^{n-1}\Pr\left((x_i^\sOSIC=\hx_i)|\bigcap_{j=i+1}^n(x_{j}^\sOSIC=\hx_{j})\right).
\end{align*}
Thus, to show \eqref{e:pb}, we show
\begin{align}
 \label{e:SuccPron}
&\Pr(x_n^\sOSIC=\hx_{n})=P_{m-n+1},\\
&\Pr\left((x_i^\sOSIC=\hx_i)|\bigcap_{j=i+1}^n(x_{j}^\sOSIC=\hx_{j})\right)=P_{m-i+1},
\label{e:SuccProi}
\end{align}
for $i=n-1,n-2,\ldots,1$.

By \eqref{e:modeltransf},
\beq
\label{e:disn}
\bar{y}_n=r_{nn}\hat{x}_n+\bar{v}_n, \quad \bar{v}_n \sim \mathcal{N}(0,\sigma^2),
\eeq
and for $i=n-1,\ldots, 1$,
\beq
\label{e:disi1}
\bar{y}_i-\sum_{j=i+1}^n r_{ij}\hx_j=r_{ii}\hat{x}_i+\bar{v}_i, \quad \bar{v}_i \sim \mathcal{N}(0,\sigma^2).
\eeq
Clearly, if $ x_{i+1}^\sOSIC=\hx_{i+1}, \cdots, x_n^\sOSIC=\hx_{n}$,
by \eqref{e:OB}, \eqref{e:disn} and \eqref{e:disi1}, we can see that, for $i=n,\ldots, 1$,
\beq
\label{e:disi}
r_{ii}\,c_i^\sOSIC=r_{ii}\hat{x}_i+\bar{v}_i, \quad \bar{v}_i \sim \mathcal{N}(0,\sigma^2).
\eeq

By Lemma \ref{l:distribution},
\[
r_{ii}^2\sim\chi^2_{m-i+1}, \quad i=n, n-1,\ldots, 1.
\]
Thus, by \eqref{e:disi} and Lemma \ref{l:probk}, we can see that both \eqref{e:SuccPron} and \eqref{e:SuccProi} hold. Hence, the theorem holds.
\end{proof}

\begin{remark}
By \eqref{e:Ck},
\begin{align}
\label{e:Ckequivalent}
\prod_{i=1}^nC_{m-i+1}&=
\prod_{i=1}^n\left(\frac{2}{\sqrt{\pi}}\frac{\Gamma((m-i+2)/2)}{\Gamma((m-i+1)/2)} \right)\nonumber\\
&=\left(\frac{2}{\sqrt{\pi}}\right)^n\frac{\Gamma((m+1)/2)}{\Gamma((m-n+1)/2)}.
\end{align}
Thus, by \eqref{e:R}, eq. \eqref{e:pb} can be rewritten as
\beq
\label{e:pb2}
P_e^{\sOSIC}=1-\alpha\prod_{i=1}^n\int_0^{\arctan(1/(2\sigma))}\cos^{m-i}(\theta)d\theta,
\eeq
where
\[
\alpha=\left(\frac{2}{\sqrt{\pi}}\right)^n\frac{\Gamma((m+1)/2)}{\Gamma((m-n+1)/2)}.
\]
Note that \eqref{e:pb2} gives a more efficient way than \eqref{e:pb} for
computing $P_e^{\sOSIC}$ since computing $\alpha$ is slightly more efficient than
computing $\prod_{i=1}^nC_{m-i+1}$.
\end{remark}

\begin{remark}
In digital  communications, matrix $\A$ is often square. That is  $m=n$. Thus, it is useful to  simplify
$P_e^{\sOSIC}$ in \eqref{e:pb2} under this condition.
Since $\Gamma(1/2)=\sqrt{\pi}$, when $m=n$, we have
\begin{align*}
\alpha=&\left(\frac{2}{\sqrt{\pi}}\right)^n\frac{\Gamma((m+1)/2)}{\Gamma((m-n+1)/2)}
=\left(\frac{2}{\sqrt{\pi}}\right)^n\frac{\Gamma((n+1)/2)}{\sqrt{\pi}}\\
=&\frac{2^n\Gamma((n+1)/2)}{\sqrt{\pi^{n+1}}}
\end{align*}
and
\begin{align*}
&\,\prod_{i=1}^n\int_0^{\arctan(1/(2\sigma))}\cos^{m-i}(\theta)d\theta\\
=&\prod_{i=1}^n\int_0^{\arctan(1/(2\sigma))}\cos^{n-i}(\theta)d\theta\\
=&\prod_{j=n}^1\int_0^{\arctan(1/(2\sigma))}\cos^{j-1}(\theta)d\theta\\
=&\prod_{j=1}^n\int_0^{\arctan(1/(2\sigma))}\cos^{j-1}(\theta)d\theta,
\end{align*}
where the second equality follows form the transformation that $j=n-i+1$.
Hence, when $m=n$, \eqref{e:pb2} can be rewritten as
\begin{align*}
P_e^{\sOSIC}=1-
\frac{2^n\Gamma((n+1)/2)}{\sqrt{\pi^{n+1}}}\prod_{i=1}^n\int_0^{\arctan(1/(2\sigma))}\cos^{i-1}(\theta)d\theta.
\end{align*}
\end{remark}

\subsection{Properties of OSIC Decoders}

We now investigate some properties of $P_e^\sOSIC$.
We begin with presenting the following important lemma,
which can be used to show that  $P_e^\sOSIC$ tends to 0 if noise level $\sigma$ tends to 0
for the one dimensional case.

\begin{lemma}
\label{l:integralk}
For any integer $k$, it holds that
\beq
\label{e:integralk}
\int_0^{\pi/2}\cos^{k-1}(\theta)d\theta=\frac{1}{C_k}.
\eeq
\end{lemma}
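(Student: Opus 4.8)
The plan is to recognize the integral $\int_0^{\pi/2}\cos^{k-1}(\theta)\,d\theta$ as a Wallis-type integral and relate it directly to the constant $C_k$ defined in \eqref{e:Ck}. Recall that $C_k=\frac{2\Gamma((k+1)/2)}{\sqrt{\pi}\,\Gamma(k/2)}$, so the claim \eqref{e:integralk} is equivalent to showing
\[
\int_0^{\pi/2}\cos^{k-1}(\theta)\,d\theta=\frac{\sqrt{\pi}\,\Gamma(k/2)}{2\,\Gamma((k+1)/2)}.
\]
The cleanest route is to invoke the standard Beta-function evaluation of powers of cosine: for any real exponent $p>-1$,
\[
\int_0^{\pi/2}\cos^{p}(\theta)\,d\theta=\frac{\sqrt{\pi}}{2}\,\frac{\Gamma\!\left(\tfrac{p+1}{2}\right)}{\Gamma\!\left(\tfrac{p}{2}+1\right)}.
\]
Setting $p=k-1$ then makes $\tfrac{p+1}{2}=\tfrac{k}{2}$ and $\tfrac{p}{2}+1=\tfrac{k+1}{2}$, and the right-hand side becomes exactly $1/C_k$, which is what we want.

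First I would state the Beta-function integral identity (or equivalently the relation $\int_0^{\pi/2}\cos^{2a-1}\theta\,\sin^{2b-1}\theta\,d\theta=\tfrac12 B(a,b)$ specialized to $b=\tfrac12$), which is a classical result and may be cited rather than reproved. Then I would perform the substitution $p=k-1$ and simplify the gamma arguments, using $\Gamma(1/2)=\sqrt{\pi}$ only implicitly through the Beta function. The final step is a purely algebraic comparison of the resulting expression with the definition of $C_k$ in \eqref{e:Ck}, confirming that the two reciprocals coincide.

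If one prefers a self-contained argument avoiding the Beta function, an alternative is the standard reduction formula obtained by integration by parts,
\[
I_{k-1}:=\int_0^{\pi/2}\cos^{k-1}(\theta)\,d\theta=\frac{k-2}{k-1}\,I_{k-3},
\]
together with the base cases $I_0=\pi/2$ and $I_1=1$, and then an induction showing both sides of \eqref{e:integralk} obey the same recursion and agree at the base cases. One checks that $1/C_k$ satisfies the identical recursion $\tfrac{1}{C_k}=\tfrac{k-2}{k-1}\tfrac{1}{C_{k-2}}$ by direct manipulation of the gamma ratios, using $\Gamma(z+1)=z\Gamma(z)$.

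The main obstacle is essentially bookkeeping rather than conceptual: one must be careful to match the half-integer gamma arguments correctly, since an off-by-one shift in the exponent $k-1$ versus $k$ is easy to mishandle, and one must confirm the recursion for $1/C_k$ aligns with the Wallis recursion across both parities of $k$. Since the lemma is stated for integer $k$, the induction route cleanly covers even and odd $k$ separately via the two base cases, whereas the Beta-function route handles all $k>0$ uniformly and is therefore the shorter and more robust choice.
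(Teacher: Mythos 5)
Your proposal is correct, but it is worth noting that the paper does not actually prove this lemma at all: it simply states that the identity ``can be obtained from \cite[(24)]{SonC12}.'' Your Beta-function route is the standard derivation one would expect to find behind such a citation: the evaluation
\[
\int_0^{\pi/2}\cos^{p}(\theta)\,d\theta=\frac{\sqrt{\pi}}{2}\,
\frac{\Gamma\!\left(\frac{p+1}{2}\right)}{\Gamma\!\left(\frac{p}{2}+1\right)}
\]
with $p=k-1$ gives $\frac{\sqrt{\pi}}{2}\,\Gamma(k/2)/\Gamma((k+1)/2)$, which matches $1/C_k$ exactly, and your gamma-argument bookkeeping ($\frac{p+1}{2}=\frac{k}{2}$, $\frac{p}{2}+1=\frac{k+1}{2}$) is right. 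Your induction fallback is also sound: the recursion $I_{k-1}=\frac{k-2}{k-1}I_{k-3}$ with base cases $I_0=\pi/2=1/C_1$ and $I_1=1=1/C_2$, together with the check $\frac{1}{C_k}=\frac{k-2}{k-1}\frac{1}{C_{k-2}}$ via $\Gamma(z+1)=z\Gamma(z)$, covers both parities. So where the paper buys brevity by outsourcing the identity to a reference, you buy self-containment at the cost of a few lines; either is acceptable, and your version would even strengthen the paper slightly. One small caveat: the lemma as stated says ``for any integer $k$,'' but both sides require $k\geq 1$ (the Beta identity needs $p=k-1>-1$, and $\Gamma(k/2)$ must be defined); since $C_k$ arises in the paper only with $k=m-i+1\geq 1$, this is an imprecision in the statement rather than a gap in your argument, though you could flag the restriction explicitly.
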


Lemma \ref{l:integralk} can be obtained from \cite[(24)]{SonC12}.

\begin{remark}
Since
\[
\lim_{\sigma\rightarrow0}\arctan\left(\frac{1}{2\sigma}\right)=\frac{\pi}{2},
\]
by \eqref{e:R} and \eqref{e:integralk}, one can easily see that,
for any integer $k$, we have
\beq
\label{e:Rklimit}
\lim_{\sigma\rightarrow0}P_k=1.
\eeq
\end{remark}

By \eqref{e:Rklimit}, we have the following result.

\begin{theorem}
\label{t:propOSIC}
The WER $P_e^\sOSIC$ (see \eqref{e:pb}) of OSIC decoders is
an increasing function of $\sigma$ and $n$. Moreover, it satisfies
\beq
\label{e:propOSIC}
\lim_{\sigma\rightarrow0}P_e^\sOSIC=0.
\eeq
\end{theorem}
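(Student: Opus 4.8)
The plan is to derive all three properties directly from the product representation \eqref{e:pb}, namely $P_e^\sOSIC = 1 - \prod_{i=1}^n P_{m-i+1}$, by reducing each claim to an elementary property of the single factors $P_k$ from \eqref{e:R}. The one structural fact I would use throughout is that every $P_k$ is a probability, so $0 \le P_k \le 1$, and that $P_k$ depends on $\sigma$ solely through the integral's upper limit $\arctan(1/(2\sigma))$.

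To handle the monotonicity in $\sigma$, I would first observe that the integrand $\cos^{k-1}(\theta)$ is nonnegative on $[0,\pi/2)$ and that $\arctan(1/(2\sigma))$ is strictly decreasing in $\sigma$ (since $1/(2\sigma)$ decreases and $\arctan$ is increasing). Hence each $P_k$ is a decreasing function of $\sigma$. Since each factor in $\prod_{i=1}^n P_{m-i+1}$ is both nonnegative and decreasing, the product is decreasing in $\sigma$, and therefore $P_e^\sOSIC$ is increasing in $\sigma$. For the monotonicity in $n$ (with $m$ held fixed), I would note that passing from $n$ to $n+1$ multiplies the product by the single extra factor $P_{m-n}\in[0,1]$, so $\prod_{i=1}^{n+1}P_{m-i+1}\le\prod_{i=1}^{n}P_{m-i+1}$; iterating this bound shows $P_e^\sOSIC(n_2)\ge P_e^\sOSIC(n_1)$ whenever $n_1<n_2$.

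For the limit \eqref{e:propOSIC}, I would invoke \eqref{e:Rklimit}, which states $\lim_{\sigma\to 0}P_k = 1$ for each fixed $k$. Because \eqref{e:pb} is a finite product of exactly $n$ such factors, the limit passes through the product to give $\lim_{\sigma\to 0}\prod_{i=1}^n P_{m-i+1}=1$, and hence $\lim_{\sigma\to 0}P_e^\sOSIC = 0$.

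The only point requiring genuine care is the $\sigma$-monotonicity step: a product of functions inherits monotonicity from its factors only when those factors are nonnegative, so it is essential here that each $P_k$ is a probability lying in $[0,1]$ rather than an arbitrary decreasing quantity. Once this is in place, the $n$-monotonicity and the limit follow immediately from the same boundedness together with the previously established pointwise limit \eqref{e:Rklimit}.
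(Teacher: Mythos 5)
Your proof is correct and follows essentially the same route as the paper's: both deduce monotonicity in $n$ from the fact that each factor $P_{m-i+1}$ in \eqref{e:pb} lies in $[0,1)$ (the paper obtains the strict bound $P_k<1$ from \eqref{e:Rklimit2}, you from the probability interpretation), and both obtain the limit \eqref{e:propOSIC} by passing $\lim_{\sigma\rightarrow 0}P_k=1$ from \eqref{e:Rklimit} through the finite product. The only difference is that you spell out the $\sigma$-monotonicity explicitly (decreasing upper limit $\arctan(1/(2\sigma))$, nonnegative integrand, and the correct caveat that a product inherits monotonicity only from nonnegative factors), a step the paper merely asserts as easily shown.
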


\begin{proof}
By  \eqref{e:integralk}, one can easily see that for any fixed $\sigma$, we have
\beq
\label{e:Rklimit2}
\int_0^{\arctan(1/(2\sigma))}\cos^{k-1}(\theta)d\theta<\frac{1}{C_k},
\eeq
which combing with \eqref{e:R} implies that $P_k<1$ for any fixed $\sigma$.
Thus,  by \eqref{e:pb}, $P_e^\sOSIC$ is an increasing function of  $n$
for any fixed $\sigma$.
One can easily show that $P_e^\sOSIC$ is an increasing function of  $\sigma$
for any fixed $n$, thus, the first part of the result holds.

By \eqref{e:pb} and \eqref{e:Rklimit}, we have
\begin{align*}
\lim_{\sigma\rightarrow0}P_e^{\sOSIC}
=&1-\lim_{\sigma\rightarrow0}\prod_{i=1}^nP_{m-i+1}\\
=&1-\prod_{i=1}^n\lim_{\sigma\rightarrow0}P_{m-i+1}=0.
\end{align*}
Thus, eq. \eqref{e:propOSIC} holds.
\end{proof}

Note that Theorem \ref{t:propOSIC} also holds for deterministic $\A$.
More details can be found in \cite[Corollary 2]{WenC17}.

In many applications, matrix $\A$ is a square matrix.
For ease of notation,  let  the WER of OSIC decoder be $P_e^\sOSIC(n)$ when  matrix  $\A$ is $n\times n$.  The  following results can be directly obtained from \eqref{e:pb}.

\begin{theorem}
\label{t:propOSIC2}
Let $n_1<n_2$ be two integers, then $P_e^\sOSIC(n_1)$ and $P_e^\sOSIC(n_2)$,
which are respectively the WER of OSIC decoders for sizes $n_1$ and $n_2$ satisfy
\begin{equation}
\label{e:SRR}
\frac{1-P_e^\sOSIC(n_2)}{1-P_e^\sOSIC(n_1)}=\prod_{k=n_1+1}^{n_2} P_k.
\end{equation}
\end{theorem}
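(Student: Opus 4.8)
The plan is to reduce the statement entirely to the closed-form success probability already obtained in Theorem \ref{t:prob}. Since the notation $P_e^\sOSIC(n)$ refers to an $n\times n$ matrix $\A$, I work in the square case $m=n$, for which \eqref{e:pb} reads
\[
1-P_e^\sOSIC(n)=\prod_{i=1}^n P_{n-i+1}.
\]

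First I would re-index this product so that it takes a running-product form. Setting $j=n-i+1$, the index $i$ ranging over $\{1,\ldots,n\}$ makes $j$ range over exactly $\{1,\ldots,n\}$, so the product is merely relabelled and
\[
1-P_e^\sOSIC(n)=\prod_{j=1}^n P_j .
\]
This is the key step: the OSIC success probability for dimension $n$ is the cumulative product $P_1 P_2\cdots P_n$ of the one-dimensional success probabilities $P_j$ defined in \eqref{e:R}. (This is precisely the reindexing already carried out in the $m=n$ remark following Theorem \ref{t:prob}.)

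Next I would apply this identity at the two sizes $n_1$ and $n_2$ and form the ratio. Because $n_1<n_2$, the factor $\prod_{j=1}^{n_1}P_j$ is common to numerator and denominator and cancels, leaving
\[
\frac{1-P_e^\sOSIC(n_2)}{1-P_e^\sOSIC(n_1)}
=\frac{\prod_{j=1}^{n_2}P_j}{\prod_{j=1}^{n_1}P_j}
=\prod_{k=n_1+1}^{n_2}P_k,
\]
which is exactly \eqref{e:SRR}.

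I expect no genuine analytic difficulty here; the entire content is the observation that in the square case the product in \eqref{e:pb} collapses into the cumulative product $\prod_{j=1}^n P_j$, after which the ratio identity is immediate. The only place to exercise a little care is the re-indexing step, where one must verify that the degrees of freedom $m-i+1=n-i+1$ sweep out precisely $\{1,\ldots,n\}$ as $i$ runs over $\{1,\ldots,n\}$; once this is confirmed, the cancellation and hence \eqref{e:SRR} follow at once.
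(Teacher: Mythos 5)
Your proof is correct and takes essentially the same approach as the paper: the paper states the theorem as following directly from \eqref{e:pb}, and in the square case $m=n$ the re-indexing $j=n-i+1$ you perform (which is exactly the relabelling already done in the paper's remark on the $m=n$ case) gives $1-P_e^\sOSIC(n)=\prod_{j=1}^{n}P_j$, from which \eqref{e:SRR} follows by cancelling the common factor $\prod_{j=1}^{n_1}P_j$. Your only added care --- checking that the degrees of freedom sweep out $\{1,\ldots,n\}$, and implicitly that each $P_j>0$ so the cancellation is legitimate --- makes the write-up slightly more explicit than the paper's, but the argument is the same.
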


Theorem \ref{t:propOSIC2} quantifies the gap between two $P_e^\sOSIC$ for  two different sizes. Specifically, if noise level  $\sigma$ converges to  0, then by \eqref{e:Rklimit},
$P_k$ is close to 1 for any integer $k$.
Thus, eq. \eqref{e:SRR} indicates that when  noise level   $\sigma$ converges  to 0,
the difference between $1-P_e^\sOSIC(n_1)$ and $1-P_e^\sOSIC(n_2)$ is  small,
implying that the gap between $P_e^\sOSIC(n_1)$ and $P_e^\sOSIC(n_2)$ is very small
as long as noise level  $\sigma$ is near  0.
For more details, see the numerical experiments in Section \ref{s:Sim}.

\section{WER for BSIC Decoders}
\label{s:BILS}
As mentioned before, for digital wireless communications and other applications,  $\hbx$  is uniformly distributed over $\mathcal{B}$. For  this condition,  we analyze the WER of BSIC decoder.

\subsection{WER for BSIC Decoders}

To derive  closed-form  $P^{\sBSIC}_e$, we first  introduce the following useful
lemma, which analyzes the WER for one dimensional case.

\begin{lemma}
\label{l:probkbox2}
Suppose that we have the scale linear model \eqref{e:modelk},
where $\hat{x}\in \mathbb{Z}$ is uniformly distributed on $[\ell,u]$,
$\bar{v}\in \mathbb{R}$ is a noise number following the Gaussian distribution $\mathcal{N}(0,\sigma^2)$,
and $r^2>0$, which is independent with $\bar{v}$,
follows central chi-square distribution $\chi_k^2$ with $k>0$ degree of freedom.
Let
\beq
\label{e:BBk}
\begin{split}
 x&=
\begin{cases}
\ell, & \mbox{ if }\   \lfloor\bar{y}/r\rceil\leq \ell\\
\lfloor\bar{y}/r\rceil, & \mbox{ if }\    \ell<\lfloor\bar{y}/r\rceil< u\\
u, & \mbox{ if }\    \lfloor\bar{y}/r\rceil \geq u
\end{cases}.
\end{split}
\eeq
Then $x$ satisfies
\begin{align}
\label{e:pkbox2}
\Pr(x=\hat{x})=\bar{P}_k(u-\ell),
\end{align}
where for $\eta>0$,
\begin{align}
\label{e:Rbar}
\bar{P}_k(\eta)
=\frac{C_k}{\eta+1}\left(\frac{1}{C_k}+\eta\int_{0}^{\arctan(1/2\sigma)}
\cos^{k-1}(\theta)d\theta\right)
\end{align}
with $C_k$ being defined in \eqref{e:Ck}.
\end{lemma}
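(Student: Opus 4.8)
The plan is to condition on the value of the uniform integer $\hat{x}$ and apply the law of total probability. Since $\hat{x}$ is uniform over the $\eta+1$ integers $\ell,\ell+1,\ldots,u$ with $\eta=u-\ell$, I would write
\[
\Pr(x=\hat{x})=\frac{1}{\eta+1}\sum_{j=\ell}^{u}\Pr(x=\hat{x}\mid \hat{x}=j).
\]
The summands split into two types: the two \emph{boundary} values $j\in\{\ell,u\}$ and the $\eta-1$ \emph{interior} values $\ell<j<u$. The guiding observation is that the clamping in \eqref{e:BBk} is inactive for interior points but \emph{enlarges} the success event for boundary points, so the two types contribute differently and must be handled separately.

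For an interior value $j$, the event $x=j$ coincides with $\lfloor\bar{y}/r\rceil=j$, because on that event the unconstrained rounded estimate already lies strictly inside $(\ell,u)$. Writing $\bar{y}=rj+\bar{v}$ and using $\lfloor j+\bar{v}/r\rceil=j+\lfloor\bar{v}/r\rceil$, this becomes $\lfloor\bar{v}/r\rceil=0$, a probability independent of $j$. By Lemma \ref{l:probk} it equals $P_k$, so each interior term contributes $P_k$.

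The substance of the argument is the boundary terms, which I expect to be the only delicate point. For $j=\ell$ the success event is $\lfloor\bar{y}/r\rceil\le\ell$, i.e. $\lfloor\bar{v}/r\rceil\le 0$, which under the downward tie-breaking convention is exactly $\bar{v}\le r/2$. Conditioning on $r>0$ and using that $\bar{v}\sim\mathcal{N}(0,\sigma^2)$ is symmetric about $0$, I would write $\Pr(\bar{v}\le r/2\mid r)=\tfrac12+\Pr(0<\bar{v}\le r/2\mid r)=\tfrac12+\tfrac12\Pr(-r/2<\bar{v}\le r/2\mid r)$, where the last factor is the conditional version of $P_k$. Taking the expectation over the chi-square variable $r$ then gives $\Pr(x=\ell\mid\hat{x}=\ell)=\tfrac{1+P_k}{2}$, and the symmetric computation yields the same value for $j=u$. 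The care needed here is to track the tie-breaking convention correctly and to justify that the symmetry identity, which holds conditionally on $r$, survives the averaging over $r$; this works because the identity is linear in the conditional probabilities, so the expectation simply replaces the conditional $P_k$ by $P_k$.

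Combining the two boundary terms with the $\eta-1$ interior terms then gives
\[
\Pr(x=\hat{x})=\frac{1}{\eta+1}\left[2\cdot\frac{1+P_k}{2}+(\eta-1)P_k\right]=\frac{1+\eta P_k}{\eta+1}.
\]
Finally, substituting $P_k=C_k\int_0^{\arctan(1/(2\sigma))}\cos^{k-1}(\theta)\,d\theta$ from Lemma \ref{l:probk} and factoring out $C_k$ turns the right-hand side into exactly \eqref{e:Rbar}, which completes the proof.
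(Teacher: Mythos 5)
Your proposal is correct and follows essentially the same route as the paper's own proof in Appendix B: condition on whether $\hat{x}$ is a boundary point or an interior point of $[\ell,u]$, use the symmetry of $W=\bar{v}/r$ about zero (which holds conditionally on $r$ and survives averaging) to get the boundary success probability $\tfrac{1}{2}(1+P_k)$, and apply Lemma~\ref{l:probk} for the interior terms. The only cosmetic difference is that the paper expresses the boundary contribution as $\frac{C_k}{2}\int_{-\pi/2}^{\arctan(1/(2\sigma))}\cos^{k-1}(\theta)\,d\theta$ via \eqref{e:integralk}, which is exactly your $(1+P_k)/2$ in integral form.
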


\begin{proof}
See Appendix \ref{s:probkproofbox2}.
\end{proof}

By using  Lemmas \ref{l:distribution} and \ref{l:probkbox2},
we have the following theorem for $P_e^{\sBSIC}$.

\begin{theorem}
\label{t:probbox}
Suppose that $\hbx$ in \eqref{e:model} is uniformly distributed over the constraint box $\mathcal{B}$ (see \eqref{e:box}), and $\hbx$ and ${\v}$ are independent.
Then, the word error rate $P_e^{\sBSIC}$ of BSIC decoder (see \eqref{e:BB}) satisfies
\beq
\label{e:pbbox}
P_e^{\sBSIC}\equiv\Pr(\x^{\sBSIC}\neq\hbx)=1-\prod_{i=1}^n \bar{P}_{m-i+1}(u_i-\ell_i),
\eeq
where $\bar{P}_{m-i+1}(u_i-\ell_i)$ is defined in \eqref{e:Rbar}.
\end{theorem}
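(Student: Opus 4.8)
The plan is to mirror the proof of Theorem \ref{t:prob} almost verbatim, replacing the unconstrained one-dimensional lemma (Lemma \ref{l:probk}) with its box-constrained counterpart (Lemma \ref{l:probkbox2}). First I would set $P_s^{\sBSIC}=\Pr(\x^{\sBSIC}=\hbx)=1-P_e^{\sBSIC}$ and apply the chain rule of conditional probabilities exactly as before, writing
\begin{align*}
P_s^{\sBSIC} &=\Pr\left(\bigcap_{i=1}^n(x_i^\sBSIC=\hx_i)\right) =\Pr(x_n^\sBSIC=\hx_{n}) \\
& \quad\times \prod_{i=1}^{n-1}\Pr\left((x_i^\sBSIC=\hx_i)\,\Big|\,\bigcap_{j=i+1}^n(x_{j}^\sBSIC=\hx_{j})\right).
\end{align*}
The goal then reduces to showing that each of these factors equals $\bar{P}_{m-i+1}(u_i-\ell_i)$.

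Next I would observe that, conditioned on the event $x_{i+1}^\sBSIC=\hx_{i+1},\ldots,x_n^\sBSIC=\hx_n$, the back-substitution formula \eqref{e:BB} yields the same cancellation identity as in the unconstrained case: by \eqref{e:modeltransf} we get $r_{ii}\,c_i^\sBSIC=r_{ii}\hat{x}_i+\bar{v}_i$ with $\bar{v}_i\sim\mathcal{N}(0,\sigma^2)$, so that $c_i^\sBSIC=\bar{y}_i/r_{ii}$ reproduces exactly the scalar model \eqref{e:modelk} and the clipped rounding rule \eqref{e:BBk} with $r=r_{ii}$, $\ell=\ell_i$, $u=u_i$. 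By Lemma \ref{l:distribution}, $r_{ii}^2\sim\chi^2_{m-i+1}$ and, crucially, the $r_{ii}$ are mutually independent across $i$; moreover $r_{ii}$ depends only on the QR factorization of the first $i$ columns and is independent of $\bar{v}_i$, so the hypotheses of Lemma \ref{l:probkbox2} are met at each stage. Applying that lemma with $k=m-i+1$ gives each conditional factor the value $\bar{P}_{m-i+1}(u_i-\ell_i)$, and taking the product establishes \eqref{e:pbbox}.

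The main obstacle, and the point requiring the most care, is justifying that the conditioning in the chain rule does not disturb the distributional assumptions of Lemma \ref{l:probkbox2}. Two subtleties arise. First, the lemma assumes $\hat{x}$ is uniform on $[\ell,u]$ and independent of $\bar{v}$; I must confirm that conditioning on the higher-index success events $\bigcap_{j>i}(x_j^\sBSIC=\hx_j)$ leaves the marginal law of $\hat{x}_i$ unchanged. This follows from the assumed independence of the components of $\hbx$ (each uniform on its own $[\ell_i,u_i]$) together with the independence of $\hbx$ and $\v$, but it should be stated explicitly. Second, I must verify that the noise term $\bar{v}_i$ entering stage $i$ is independent of the triangular factor $r_{ii}$ and of the events being conditioned on — again a consequence of Lemma \ref{l:distribution} and the structure of \eqref{e:modeltransf}. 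Once these independence facts are in place, the per-stage application of Lemma \ref{l:probkbox2} is immediate and the product formula follows directly.
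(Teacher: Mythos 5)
Your proposal is correct and follows exactly the route the paper intends: the paper omits the proof of Theorem \ref{t:probbox}, stating that it uses ``more or less the same techniques'' as Theorem \ref{t:prob}, namely the chain-rule decomposition of $P_s^{\sBSIC}$ into per-stage conditional success probabilities, the cancellation identity $r_{ii}\,c_i^{\sBSIC}=r_{ii}\hat{x}_i+\bar{v}_i$ under the higher-index success events, and the per-stage application of Lemma \ref{l:probkbox2} with $k=m-i+1$ via Lemma \ref{l:distribution}. Your explicit verification that conditioning leaves $\hat{x}_i$ uniform and preserves the independence hypotheses is a welcome (and correct) elaboration of what the paper leaves implicit; the only blemish is the passing phrase $c_i^{\sBSIC}=\bar{y}_i/r_{ii}$, which should read that the adjusted observation $\bar{y}_i-\sum_{j>i}r_{ij}\hat{x}_j$ plays the role of $\bar{y}$ in the scalar model \eqref{e:modelk}, as your preceding identity already makes clear.
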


Since $\hbx$ is uniformly distributed over  $\mathcal{B}$,
$\hat{x}_i$ is uniformly distributed on $[\ell_i,u_i]$ for $1\leq i\leq n$.
Theorem \ref{t:probbox} can be proved by using more or less the same techniques as
that for Theorem \ref{t:prob}, thus we omit its proof.

\begin{remark}
Similar to the ordinary case, by  \eqref{e:Ckequivalent} and \eqref{e:Rbar},
eq. \eqref{e:pbbox} can be rewritten as
\begin{align}
\label{e:pbbox2}
P_e^{\sBSIC}=1-&\beta\prod_{i=1}^n\hat{P}_i,
\end{align}
where
\[
\beta=\left(\frac{2}{\sqrt{\pi}}\right)^n\frac{\Gamma(m+1)/2}{\Gamma(m-n+1)/2}\prod_{i=1}^n
\frac{1}{(u_i-\ell_i+1)}
\]
and
\[
\hat{P}_i=\frac{1}{C_{m-i+1}}+(u_i-\ell_i)\int_{0}^{\arctan(1/2\sigma)}\cos^{m-i}(\theta)d\theta
\]
with $C_{m-i+1}$ being defined in \eqref{e:Ck}.
Clearly, $P_e^{\sBSIC}$ computed by \eqref{e:pbbox2} is more efficient than that via \eqref{e:pbbox} since computing $\beta$ is slightly more efficient than
computing $\prod_{i=1}^n\frac{C_{m-i+1}}{(u_i-\ell_i+1)}$.
\end{remark}

\begin{remark}
\label{r:pbcube}
In digital communications, the box $\mathcal{B}$ is usually a n-dimensional cube.
Let $d$ be the length of the box (i.e., $d=u_i-l_i$) and $m=n$,
then \eqref{e:pbbox2} can be further rewritten as
\begin{align*}
P_e^{\sBSIC}&=1-\prod_{i=1}^n \bar{P}_{i}(d)\\
&=1-\beta\prod_{i=1}^n
\left(\frac{1}{C_{i}}+d\int_{0}^{\arctan(1/2\sigma)}\cos^{i-1}(\theta)d\theta\right),
\end{align*}
where $C_i$ is defined in \eqref{e:Ck} and
\[
\beta=\left(\frac{2}{\sqrt{\pi}(d+1)}\right)^n\frac{\Gamma((m+1)/2)}{\sqrt{\pi}}.
\]
\end{remark}

\subsection{WER Properties of BSIC Decoders}

In this subsection, we study some properties of the WER expression.
We first investigate the property of $\bar{P}_i$.
Specifically, we have the following result.
\begin{lemma}
\label{l:Rbardecreasing}
For any fixed $1\leq i \leq n$ and $\sigma$, $\bar{P}_i$ (see \eqref{e:Rbar}) is a
strictly decreasing function of $\eta$,
i.e., the following inequality holds for any $\epsilon>0$:
\beq
\label{e:Rbardecreasing}
\bar{P}_i(\eta)> \bar{P}_i(\eta+\epsilon).
\eeq
\end{lemma}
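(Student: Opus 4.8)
The plan is to reduce $\bar{P}_i(\eta)$ to an elementary rational function of $\eta$ and then read off monotonicity from the sign of a single coefficient. Writing $I_i := \int_0^{\arctan(1/(2\sigma))}\cos^{i-1}(\theta)\,d\theta$, the definition \eqref{e:Rbar} becomes
\[
\bar{P}_i(\eta)=\frac{C_i}{\eta+1}\left(\frac{1}{C_i}+\eta I_i\right)=\frac{1+(C_iI_i)\,\eta}{1+\eta}.
\]
By \eqref{e:R} the coefficient $C_iI_i$ is exactly $P_i$, so with the abbreviation $a:=P_i$ we obtain the compact form $\bar{P}_i(\eta)=(1+a\eta)/(1+\eta)$. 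This is a M\"obius-type function of $\eta$, whose monotonicity is governed entirely by whether $a$ is smaller or larger than $1$.

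The single fact I need is therefore $a<1$. This is supplied by \eqref{e:Rklimit2}, which gives $I_i<1/C_i$, i.e.\ $P_i=C_iI_i<1$ for every fixed $\sigma$ and every $i$. Given $a<1$, strict monotonicity follows from a direct difference computation, which I prefer here because the lemma is phrased as a strict inequality for each $\epsilon>0$. For $\epsilon>0$,
\[
\bar{P}_i(\eta)-\bar{P}_i(\eta+\epsilon)=\frac{1+a\eta}{1+\eta}-\frac{1+a(\eta+\epsilon)}{1+\eta+\epsilon}=\frac{(1-a)\,\epsilon}{(1+\eta)(1+\eta+\epsilon)},
\]
after clearing denominators and cancelling. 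Since $a<1$, $\epsilon>0$, and both denominator factors are positive, the right-hand side is strictly positive, which is precisely \eqref{e:Rbardecreasing}. Equivalently one could differentiate, obtaining $\tfrac{d}{d\eta}\bar{P}_i(\eta)=(a-1)/(1+\eta)^2<0$.

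I expect no genuine obstacle: the only substantive ingredient is the inequality $P_i<1$, already established in the course of proving Theorem \ref{t:propOSIC}. The one place to watch is the algebraic cancellation in the numerator of the difference, where the $\eta$- and $\eta^2$-terms, together with the cross term $a\eta\epsilon$, must all drop out so that only $(1-a)\epsilon$ survives; this is routine but is exactly the step where a sign slip would pass unnoticed, so I would verify it explicitly before concluding.
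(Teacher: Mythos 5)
Your proof is correct and takes essentially the same route as the paper's: both clear denominators and reduce the strict inequality \eqref{e:Rbardecreasing} to the single fact $\int_{0}^{\arctan(1/2\sigma)}\cos^{i-1}(\theta)\,d\theta < 1/C_i$, supplied by \eqref{e:Rklimit2}. Your M\"obius form $\bar{P}_i(\eta)=(1+P_i\,\eta)/(1+\eta)$ and the explicit difference $(1-P_i)\epsilon/((1+\eta)(1+\eta+\epsilon))$ simply spell out the ``basic calculations'' the paper leaves implicit, and your algebra checks out.
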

\begin{proof}
For any $1\leq i \leq n$, by \eqref{e:Rbar}, eq. \eqref{e:Rbardecreasing} is equivalent to
\begin{align*}
&\frac{C_i}{\eta+1}\left(\frac{1}{C_i}+\eta
\int_{0}^{\arctan(1/2\sigma)}\cos^{i-1}(\theta)d\theta\right)\\
>&\frac{C_i}{\eta+\epsilon+1}\left(\frac{1}{C_i}+(\eta+\epsilon)
\int_{0}^{\arctan(1/2\sigma)}\cos^{i-1}(\theta)d\theta\right).
\end{align*}
By some basic calculations, one can easily verify that the aforementioned inequality can be rewritten as
\begin{align*}
\frac{1}{C_{i}}
>\int_{0}^{\arctan(1/2\sigma)}\cos^{i-1}(\theta)d\theta.
\end{align*}
By \eqref{e:Rklimit2}, the above inequality holds.
Hence, eq. \eqref{e:Rbardecreasing} holds.
\end{proof}

By \eqref{e:pbbox} and Lemma \ref{l:Rbardecreasing}, one can easily obtain the following result.
\begin{theorem}
\label{t:BSICincrease}
Let $\mathcal{B}^1$ and $\mathcal{B}^2$ be any two $n\times n$ dimensional boxes
that satisfy $u^1_i-\ell^1_i \leq u^2_i-\ell^2_i$ for $1 \leq i \leq n$,
then the WER of BSIC decoders corresponding to $\mathcal{B}^1$ and $\mathcal{B}^2$ satisfy
\beq
\label{e:BSICincrease}
P_e^\sBSIC(\mathcal{B}^1) \leq P_e^\sBSIC(\mathcal{B}^2).
\eeq
\end{theorem}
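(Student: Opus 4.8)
The plan is to reduce Theorem~\ref{t:BSICincrease} to a term-by-term comparison of the two product expressions given by Theorem~\ref{t:probbox}, and then invoke the monotonicity already established in Lemma~\ref{l:Rbardecreasing}. Writing $\eta_i^1 = u_i^1-\ell_i^1$ and $\eta_i^2 = u_i^2-\ell_i^2$, the hypothesis is precisely $\eta_i^1 \leq \eta_i^2$ for every $1 \leq i \leq n$, and by \eqref{e:pbbox} the two word error rates are
\begin{align*}
P_e^\sBSIC(\mathcal{B}^1) &= 1-\prod_{i=1}^n \bar{P}_{m-i+1}(\eta_i^1),\\
P_e^\sBSIC(\mathcal{B}^2) &= 1-\prod_{i=1}^n \bar{P}_{m-i+1}(\eta_i^2).
\end{align*}
So establishing \eqref{e:BSICincrease} is equivalent to showing the reverse inequality for the two products, namely $\prod_{i=1}^n \bar{P}_{m-i+1}(\eta_i^2) \leq \prod_{i=1}^n \bar{P}_{m-i+1}(\eta_i^1)$.

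First I would record the factorwise comparison. By Lemma~\ref{l:Rbardecreasing}, each $\bar{P}_{m-i+1}(\cdot)$ is a strictly decreasing function of its argument, so $\eta_i^1 \leq \eta_i^2$ gives $\bar{P}_{m-i+1}(\eta_i^2) \leq \bar{P}_{m-i+1}(\eta_i^1)$ for every $i$ (with strict inequality whenever $\eta_i^1 < \eta_i^2$). To pass from these factorwise inequalities to the product inequality I need that each factor is nonnegative; this is immediate since every $\bar{P}_{m-i+1}(\eta_i^j)$ is a probability by Lemma~\ref{l:probkbox2}, hence lies in $[0,1]$. Multiplying the inequalities across $i=1,\ldots,n$ (a product of nonnegative terms being monotone in each factor when the other factors are nonnegative) yields the desired product comparison, and subtracting from $1$ flips the direction to give exactly \eqref{e:BSICincrease}.

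Since the heavy lifting is done by Lemma~\ref{l:Rbardecreasing}, there is no real obstacle here; the only point requiring any care is the elementary fact that a product of nonnegative numbers does not decrease when one factor is replaced by a larger one. Concretely, for nonnegative reals $0 \leq a_i \leq b_i$ one has $\prod a_i \leq \prod b_i$, which follows by a one-line telescoping argument: replace the factors one at a time and bound each replacement using nonnegativity of the already-replaced and not-yet-replaced factors. I would state this briefly and then conclude, noting that if the inclusion is strict in at least one coordinate the inequality \eqref{e:BSICincrease} is strict as well, consistent with the strictness in Lemma~\ref{l:Rbardecreasing}.
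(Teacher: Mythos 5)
Your proposal is correct and matches the paper's argument exactly: the paper proves Theorem~\ref{t:BSICincrease} by combining the product formula \eqref{e:pbbox} with the monotonicity of $\bar{P}_i$ from Lemma~\ref{l:Rbardecreasing}, which is precisely your factorwise comparison. Your only addition is spelling out the elementary product-monotonicity step (and the strictness remark, which for full rigor needs positivity, not mere nonnegativity, of the factors --- automatic here since $\bar{P}_k(\eta) \geq 1/(\eta+1) > 0$), details the paper leaves implicit.
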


Similar to the ordinary case, the following result holds.

\begin{theorem}
\label{t:propBSIC}
The WER $P_e^\sBSIC$ of BSIC decoders is an increasing function of $\sigma$ and $n$.
Moreover it satisfies
\[
\lim_{\sigma\rightarrow0}P_e^\sBSIC=0.
\]
\end{theorem}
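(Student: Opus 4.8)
The plan is to mirror the argument used for Theorem \ref{t:propOSIC}, the key enabling step being a rewriting of the one-dimensional factor $\bar{P}_k(\eta)$ from \eqref{e:Rbar} into a transparent form. Recalling from \eqref{e:R} that $P_k=C_k\int_{0}^{\arctan(1/(2\sigma))}\cos^{k-1}(\theta)\,d\theta$, I would first observe that \eqref{e:Rbar} collapses to
\[
\bar{P}_k(\eta)=\frac{1+\eta P_k}{1+\eta}=\frac{1}{1+\eta}\cdot 1+\frac{\eta}{1+\eta}\cdot P_k,
\]
i.e. $\bar{P}_k(\eta)$ is a convex combination of $1$ and $P_k$. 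This single identity delivers all three assertions almost immediately, since $\bar{P}_k(\eta)$ then inherits the monotonicity and limiting behaviour of $P_k$ already established in Section \ref{s:OILS}.

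For the monotonicity in $n$, I would note that \eqref{e:Rklimit2} gives $P_k<1$ for every fixed $\sigma>0$, so the convex-combination form yields $0<\bar{P}_k(\eta)<1$ whenever $\eta>0$ (and $\bar{P}_k(0)=1$). By \eqref{e:pbbox}, passing from dimension $n$ to $n+1$ with $m$ fixed multiplies the product $\prod_{i=1}^n\bar{P}_{m-i+1}(u_i-\ell_i)$ by a single additional factor lying in $(0,1]$, so the product does not increase and $P_e^{\sBSIC}=1-\prod_{i=1}^n\bar{P}_{m-i+1}(u_i-\ell_i)$ does not decrease; when the appended box edge is nondegenerate the extra factor lies strictly in $(0,1)$ and the inequality is strict, exactly as in the mechanism used for $P_e^{\sOSIC}$.

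For the monotonicity in $\sigma$, I would use that $\arctan(1/(2\sigma))$ is strictly decreasing in $\sigma$ while $\cos^{k-1}(\theta)\ge 0$ on $[0,\pi/2]$, so each $P_k$ is strictly decreasing in $\sigma$; since $\partial\bar{P}_k/\partial P_k=\eta/(1+\eta)>0$, each factor $\bar{P}_{m-i+1}(u_i-\ell_i)$ is then decreasing in $\sigma$. A finite product of positive decreasing functions is decreasing, so $\prod_{i=1}^n\bar{P}_{m-i+1}(u_i-\ell_i)$ decreases and hence $P_e^{\sBSIC}$ increases with $\sigma$. For the limit, \eqref{e:Rklimit} gives $P_k\to 1$ as $\sigma\to 0$, whence the convex-combination form forces $\bar{P}_k(\eta)\to 1$ for each fixed $\eta$; passing the limit through the finite product (as in the proof of Theorem \ref{t:propOSIC}) yields $\lim_{\sigma\to 0}\prod_{i=1}^n\bar{P}_{m-i+1}(u_i-\ell_i)=1$ and therefore $\lim_{\sigma\to 0}P_e^{\sBSIC}=0$.

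Since every step reduces to a property of $P_k$ already proved in Section \ref{s:OILS}, I do not anticipate a genuine obstacle. The only point requiring care is the rewriting of \eqref{e:Rbar} into the convex-combination form, after which the claim about $n$ must be read with $m$ held fixed so that the first $n$ factors are common to both dimensions, precisely as in the OSIC argument.
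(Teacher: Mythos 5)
Your proof is correct and takes essentially the same route as the paper: the paper likewise reduces everything to $P_k<1$ (eq.~\eqref{e:Rklimit2}) and $P_k\to 1$ as $\sigma\to 0$, computing $\lim_{\sigma\to 0}\bar{P}_{m-i+1}(u_i-\ell_i)=1$ directly from \eqref{e:Rbar} and \eqref{e:integralk} and dispatching the monotonicity claims as ``similar to the proof of Theorem~\ref{t:propOSIC}.'' Your identity $\bar{P}_k(\eta)=(1+\eta P_k)/(1+\eta)$ is just a clean repackaging of the same algebra (the paper performs the equivalent manipulation inside the proofs of Lemma~\ref{l:Rbardecreasing} and Theorem~\ref{t:RRbar}), with the pleasant side effect of making explicit the monotonicity details the paper leaves to the reader.
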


\begin{proof}
Similar to the proof of Theorem \ref{t:propOSIC}, one can see that $P_e^\sBSIC$ is an increasing function of $\sigma$ and $n$.

We next  prove the second part of Theorem 6.
By \eqref{e:Rbar} and \eqref{e:integralk}, for any $1\leq i\leq n$, we have
\begin{align}
\label{e:Rkbarlimit}
&\lim_{\sigma\rightarrow0}\bar{P}_{m-i+1}(u_i-\ell_i) \nonumber\\
=&\frac{C_{m-i+1}}{u_i-\ell_i+1}\left(\frac{1}{C_{m-i+1}}+(u_i-\ell_i)\frac{1}{C_{m-i+1}}\right)=1.
\end{align}
Thus
\begin{align*}
\lim_{\sigma\rightarrow0}P_e^{\sBSIC}
=&1-\lim_{\sigma\rightarrow0}\prod_{i=1}^n\bar{P}_{m-i+1}(u_i-\ell_i)=0.
\end{align*}
Hence, the theorem holds.
\end{proof}

Note that Theorem \ref{t:propBSIC} also holds for deterministic $\A$.
For more details, see \cite[Corollary 2]{WenC17}.

Similar to OSIC decoders, for easy notation,
we denote the WER of BSIC decoders
for  $n\times n$ square matrix $\A$ and a cube $\mathcal{B}$
whose edge length is $d$ as $P_e^\sBSIC(n,d)$.
The  following results can then  be directly obtained from \eqref{e:pbbox}.

\begin{theorem}
\label{t:propBSIC2}
Let $n_1<n_2$ be two integers, then $P_e^\sBSIC(n_1,d)$ and $P_e^\sBSIC(n_2,d)$ satisfy
\beq
\label{e:SRRbox}
\frac{1-P_e^\sBSIC(n_2,d)}{1-P_e^\sBSIC(n_1,d)}=\prod_{k=n_1+1}^{n_2} \bar{P}_k(d).
\eeq
\end{theorem}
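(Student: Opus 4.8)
The plan is to specialize the general word error rate formula \eqref{e:pbbox} to the present setting — a square matrix ($m=n$) and a cube $\mathcal{B}$ of edge length $d$, so that $u_i-\ell_i=d$ for every $i$ — and then observe that the resulting success probability is a product that depends on the dimension only through its upper index. First I would write the complementary (success) probability. Setting $m=n$ and $u_i-\ell_i=d$ in \eqref{e:pbbox} gives
\[
1-P_e^\sBSIC(n,d)=\prod_{i=1}^n \bar{P}_{n-i+1}(d).
\]
Because $\bar{P}_k(d)$ here no longer depends on the coordinate index $i$ except through its subscript $n-i+1$, the next step is a re-indexing: substituting $k=n-i+1$ (so $k$ runs from $n$ down to $1$ as $i$ runs from $1$ to $n$) yields
\[
1-P_e^\sBSIC(n,d)=\prod_{k=1}^n \bar{P}_{k}(d).
\]

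The crucial feature of this last expression is that the factor $\bar{P}_k(d)$ attached to index $k$ is the \emph{same} for every dimension; the only effect of increasing $n$ is to append further factors to the product. Applying the formula to $n_1$ and to $n_2$ and forming the ratio, the common factors $\prod_{k=1}^{n_1}\bar{P}_k(d)$ in numerator and denominator cancel, leaving
\[
\frac{1-P_e^\sBSIC(n_2,d)}{1-P_e^\sBSIC(n_1,d)}
=\frac{\prod_{k=1}^{n_2}\bar{P}_{k}(d)}{\prod_{k=1}^{n_1}\bar{P}_{k}(d)}
=\prod_{k=n_1+1}^{n_2}\bar{P}_{k}(d),
\]
which is exactly \eqref{e:SRRbox}. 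This mirrors the argument already used for the OSIC decoder in Theorem \ref{t:propOSIC2}, with $\bar{P}_k(d)$ playing the role that $P_k$ played there.

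There is no substantial obstacle here, since the result follows directly from \eqref{e:pbbox} once the cube and square-matrix assumptions are imposed. The only point requiring care is the re-indexing step: one must verify that specializing to a cube makes the $i$-dependence of each factor collapse to its subscript, so that $\prod_{i=1}^n\bar{P}_{n-i+1}(d)=\prod_{k=1}^n\bar{P}_{k}(d)$ and the product over $\{1,\dots,n_1\}$ is genuinely a common sub-product of the product over $\{1,\dots,n_2\}$. For a general (non-cube) box the factors $\bar{P}_{m-i+1}(u_i-\ell_i)$ would depend on the coordinate through $u_i-\ell_i$ as well, the telescoping would fail, and no such clean ratio would hold; this is precisely why the statement is phrased for a cube.
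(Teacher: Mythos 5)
Your proof is correct and matches the paper's intended argument: the paper states that Theorem \ref{t:propBSIC2} is ``directly obtained from \eqref{e:pbbox}'', which is exactly your specialization to $m=n$ and $u_i-\ell_i=d$ followed by the re-indexing $k=n-i+1$ and cancellation of the common sub-product $\prod_{k=1}^{n_1}\bar{P}_k(d)$. Your closing observation about why the telescoping requires a cube is a sound extra check, entirely consistent with the paper's hypotheses.
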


Similar to the case of OSIC, Theorem \ref{t:propBSIC2} quantifies the gap between two $P_e^\sBSIC$. Specifically, by \eqref{e:Rkbarlimit},
if $\sigma$ is close to 0, then $\bar{P}_k(d)$ is close to 1 for any integer $k$ and $d$.
Thus, eq. \eqref{e:SRRbox} indicates that when $\sigma$ is close to 0,
the difference between $1-P_e^\sBSIC(n_1,d)$ and $1-P_e^\sBSIC(n_2,d)$ is very small,
implying that the gap between $P_e^\sBSIC(n_1,d)$ and $P_e^\sBSIC(n_2,d)$ is very small
as long as noise level $\sigma$ is close to 0.
For more details, see the numerical experiments in Section \ref{s:Sim}.

\section{Relationship between $P_e^\sOSIC$ and $P_e^\sBSIC$ }
\label{s:relation}

In this section, we investigate the relationship between $P_e^\sOSIC$ and $P_e^\sBSIC$ .We first investigate the relationship between $P_i$ and $\bar{P}_i$ (see \eqref{e:R} and \eqref{e:Rbar}).
Specifically, we have the following result.
\begin{theorem}
\label{t:RRbar}
For any fixed $1\leq i \leq n$ and $\sigma$, if $\eta>0$, then $P_i$ and $\bar{P}_i$ satisfy
\beq
\label{e:RRbar}
\bar{P}_i(\eta)> P_i.
\eeq
Moreover,
\beq
\label{e:RRbar2}
\lim_{\eta \rightarrow \infty}\bar{P}_i(\eta)= P_i.
\eeq
\end{theorem}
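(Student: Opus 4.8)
The plan is to reduce both parts of the theorem to a single algebraic identity that rewrites $\bar{P}_i(\eta)$ as an explicit function of $P_i$. Writing $I_\sigma:=\int_0^{\arctan(1/(2\sigma))}\cos^{i-1}(\theta)\,d\theta$, I would first distribute the prefactor $C_i/(\eta+1)$ in the definition \eqref{e:Rbar}, giving $\bar{P}_i(\eta)=\frac{1}{\eta+1}+\frac{\eta C_i I_\sigma}{\eta+1}$. Since \eqref{e:R} states $P_i=C_i I_\sigma$, this collapses to the compact form
\beqnn
\bar{P}_i(\eta)=\frac{1+\eta P_i}{\eta+1},
\eeqnn
which is the key step on which everything else rests.

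With this identity in hand, the inequality \eqref{e:RRbar} follows from a direct subtraction:
\beqnn
\bar{P}_i(\eta)-P_i=\frac{1+\eta P_i}{\eta+1}-P_i=\frac{1-P_i}{\eta+1}.
\eeqnn
It then suffices to check the sign. The denominator is positive because $\eta>0$, and the numerator is positive because $P_i<1$; the latter is precisely what \eqref{e:Rklimit2} combined with \eqref{e:R} guarantees, since the integral over $[0,\arctan(1/(2\sigma))]$ is strictly smaller than its value $1/C_i$ over $[0,\pi/2]$. Hence $\bar{P}_i(\eta)-P_i>0$, establishing \eqref{e:RRbar}.

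For the limit \eqref{e:RRbar2}, I would pass to the limit in the same difference expression: because $1-P_i$ is a fixed finite quantity independent of $\eta$, we get $\lim_{\eta\to\infty}(\bar{P}_i(\eta)-P_i)=\lim_{\eta\to\infty}\frac{1-P_i}{\eta+1}=0$, and therefore $\lim_{\eta\to\infty}\bar{P}_i(\eta)=P_i$.

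I anticipate no genuine obstacle in this argument; the whole proof hinges on spotting the simplification $\bar{P}_i(\eta)=(1+\eta P_i)/(\eta+1)$, after which both claims are immediate. The only external fact required is the strict bound $P_i<1$, which is already available from \eqref{e:Rklimit2}, so no new estimates are needed.
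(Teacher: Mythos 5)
Your proof is correct and takes essentially the same route as the paper: both arguments reduce the strict inequality to the bound $\frac{1}{C_i}>\int_0^{\arctan(1/(2\sigma))}\cos^{i-1}(\theta)\,d\theta$ from \eqref{e:Rklimit2} (your condition $P_i<1$ is exactly this bound multiplied through by $C_i$) and obtain the limit by direct evaluation as $\eta\to\infty$. Your closed form $\bar{P}_i(\eta)=(1+\eta P_i)/(\eta+1)$ is merely a tidier packaging of the paper's two separate calculations, letting both claims be read off from the single difference $(1-P_i)/(\eta+1)$.
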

\begin{proof}
We first show \eqref{e:RRbar}.
For any $1\leq i \leq n$, by \eqref{e:R} and \eqref{e:Rbar}, eq. \eqref{e:RRbar} is equivalent to
\begin{align*}
&\frac{1}{\eta+1}\left(\frac{1}{C_i}+\eta
\int_{0}^{\arctan(1/2\sigma)}\cos^{i-1}(\theta)d\theta\right)\\
>&\int_0^{\arctan(1/(2\sigma))}\cos^{i-1}(\theta)d\theta
\end{align*}
which can be rewritten as
\begin{align*}
\frac{1}{C_{i}}>\int_{0}^{\arctan(1/2\sigma)}\cos^{i-1}(\theta)d\theta.
\end{align*}
By \eqref{e:Rklimit2}, the above inequality holds.
Hence, eq. \eqref{e:RRbar} holds.

In the following, we prove \eqref{e:RRbar2}. Clearly, for any $1\leq i \leq n$,
\begin{align*}
\lim_{ \eta \rightarrow \infty}\bar{P}_i(\eta)=&\lim_{\eta \rightarrow \infty}\frac{C_i}{\eta+1}
\left(\frac{1}{C_i}+\eta
\int_{0}^{\arctan(1/2\sigma)}\cos^{i-1}(\theta)d\theta\right)\\
=&C_i\int_0^{\arctan(1/(2\sigma))}\cos^{i-1}(\theta)d\theta=P_i.
\end{align*}
Thus, eq. \eqref{e:RRbar2} holds.
\end{proof}

By \eqref{e:pb}, \eqref{e:pbbox} and Theorem \ref{t:RRbar}, we obtain Theorem~\ref{t:OSICBSIC},
which characterizes the relationship between $P_e^\sOSIC$ and $P_e^\sBSIC$.
\begin{theorem}
\label{t:OSICBSIC}
For any $\mathcal{B}$, $P_e^\sOSIC$ and $P_e^\sBSIC$ have the following relationship
\beq
\label{e:OSICBSIC}
P_e^\sBSIC < P_e^\sOSIC.
\eeq
Moreover,
\[
\lim_{\mbox{all} \,\;1\leq i\leq n, u_i-\ell_i \rightarrow \infty} P_e^\sBSIC = P_e^\sOSIC.
\]
\end{theorem}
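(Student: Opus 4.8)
The plan is to reduce the statement about the multidimensional word error rates $P_e^\sOSIC$ and $P_e^\sBSIC$ entirely to the one-dimensional comparison already established in Theorem~\ref{t:RRbar}, exploiting the fact that both WER expressions \eqref{e:pb} and \eqref{e:pbbox} are products over the $n$ coordinates. First I would invoke the closed-form expressions: by \eqref{e:pb} we have $1-P_e^\sOSIC=\prod_{i=1}^n P_{m-i+1}$, and by \eqref{e:pbbox} we have $1-P_e^\sBSIC=\prod_{i=1}^n \bar{P}_{m-i+1}(u_i-\ell_i)$. Thus proving $P_e^\sBSIC < P_e^\sOSIC$ is equivalent to proving the reversed inequality on the complements,
\begin{align*}
\prod_{i=1}^n \bar{P}_{m-i+1}(u_i-\ell_i) > \prod_{i=1}^n P_{m-i+1}.
\end{align*}

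The key step is a factorwise comparison. For a genuine box we have $u_i-\ell_i\geq 0$; assuming the box is nondegenerate so that at least one coordinate satisfies $u_i-\ell_i=\eta>0$, Theorem~\ref{t:RRbar} gives $\bar{P}_{m-i+1}(u_i-\ell_i) > P_{m-i+1}$ for those coordinates, while for any coordinate with $u_i=\ell_i$ the factors coincide (indeed $\bar{P}_k(0)=P_k$ is immediate from \eqref{e:Rbar}). Since every factor $P_{m-i+1}$ is strictly positive by Lemma~\ref{l:probk}, I can multiply these coordinatewise inequalities together: a product of nonnegative factors in which each term dominates the corresponding term, with strict domination in at least one, yields strict domination of the products. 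This gives the strict inequality on the complements, hence \eqref{e:OSICBSIC}.

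For the limiting statement, I would apply the second part of Theorem~\ref{t:RRbar}, namely $\lim_{\eta\to\infty}\bar{P}_i(\eta)=P_i$, to each factor. As all edge lengths $u_i-\ell_i\to\infty$ simultaneously, each $\bar{P}_{m-i+1}(u_i-\ell_i)$ tends to $P_{m-i+1}$; since the product is a finite product of $n$ continuous factors, the limit passes inside the product, giving $\lim \prod_{i=1}^n \bar{P}_{m-i+1}(u_i-\ell_i)=\prod_{i=1}^n P_{m-i+1}$, and therefore $\lim P_e^\sBSIC = P_e^\sOSIC$.

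I do not anticipate a serious obstacle here, since Theorem~\ref{t:RRbar} already carries the analytic weight; the only point requiring mild care is the degenerate case where some coordinate has $u_i=\ell_i$, which must be handled by the equality $\bar{P}_k(0)=P_k$ rather than the strict inequality. One should also note implicitly that the strictness in \eqref{e:OSICBSIC} presumes the box is not a single point (otherwise the two decoders coincide and the WERs are equal); for any box containing more than one vector, at least one coordinate has positive edge length, which secures the strict inequality.
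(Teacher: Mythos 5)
Your main line of argument coincides with the paper's own proof: the paper obtains Theorem~\ref{t:OSICBSIC} precisely by combining the product formulas \eqref{e:pb} and \eqref{e:pbbox} with the factorwise inequality and limit of Theorem~\ref{t:RRbar}, and your passage of the limit through the finite product is exactly what is intended. So the core of your proposal is correct and matches the paper.

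However, your side remark on the degenerate case rests on a miscalculation. Evaluating \eqref{e:Rbar} at $\eta=0$ gives $\bar{P}_k(0)=C_k\cdot\frac{1}{C_k}=1$, not $P_k$; since $P_k=C_k\int_0^{\arctan(1/(2\sigma))}\cos^{k-1}(\theta)\,d\theta<1$ by \eqref{e:Rklimit2}, the \emph{strict} inequality $\bar{P}_k(0)>P_k$ holds even when $u_i=\ell_i$. This is also the correct probabilistic picture: a coordinate with $u_i=\ell_i$ is clamped by BSIC to its single admissible value and is therefore recovered with probability $1$, whereas the unconstrained SIC decoder can still err on it. Consequently your caveat that strictness ``presumes the box is not a single point'' is unnecessary, and its justification is wrong: for a one-point box the two decoders do \emph{not} coincide --- BSIC has WER $0$ while OSIC has positive WER for $\sigma>0$, so \eqref{e:OSICBSIC} remains strict. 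No separate degenerate-case argument is needed; one need only observe that the reduction in the proof of Theorem~\ref{t:RRbar}, namely to $\frac{1}{C_i}>\int_0^{\arctan(1/(2\sigma))}\cos^{i-1}(\theta)\,d\theta$, in fact covers $\eta=0$ as well, so the factorwise strict inequality holds for every coordinate and the product comparison goes through unconditionally.
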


Note that Theorem \ref{t:OSICBSIC} also holds for deterministic $\A$,
for more details, see \cite[Corollary 1]{WenC17}.
The inequality \eqref{e:OSICBSIC} shows that BSIC outperforms OSIC given the same level of noise.
Intuitively this is because, in OSIC, $\hat \x$ can be anywhere  in $\mathbb{Z}^n$.
In BSIC, $\hat \x$ is limited to finite number of choices,
and this property seems to improve the detection accuracy.
Theoretically, it can be showed by using \eqref{e:pb}, \eqref{e:pbbox} and Theorem \ref{t:RRbar}.

\section{Numerical Experiments}
\label{s:Sim}

We now provide simulations and numerical results  to  verify the accuracy  of the WER formulas
\eqref{e:pb} and \eqref{e:pbbox}, which are compared against  the simulated WER. Each simulation run is averaged  over $10^5$ samples.
For simplicity, we assume that $m=n$ in all of the following tests (our extensive  simulations found that both \eqref{e:pb} and \eqref{e:pbbox} are accurate for both SIC and BSIC decoders
for both $m=n$ and $m>n$).

We did the simulations by choosing a range of $n$, $\sigma$ and boxes $\mathcal{B}$
(more details on the choice of  these parameters are given subsequently).
For each fixed $n$ and $\sigma$, we randomly generated $10^5$ $\A$'s, whose entries
independent and identically follow the standard Gaussian distribution $\mathcal{N}(0,1)$,
and $10^5$ $\v$'s with each of them following the Gaussian distribution
$\mathcal{N}(\boldsymbol{0},\sigma^2 \I)$.
To illustrate the effectiveness of \eqref{e:pb},
for each generated $\A$ and $\v$, we randomly generated an $\hbx\in \mathbb{Z}^n$.
To verify the accuracy  of \eqref{e:pbbox}, for each generated $\A$ and $\v$,
we randomly generated an $\hbx$ which is uniformly distributed over a given $\mathcal{B}$.
Then, we got $10^5$ linear models which satisfy \eqref{e:model} only,
and another $10^5$ linear models which satisfy both \eqref{e:model} and \eqref{e:box}.
Then, we found $\x^\sOSIC$ and $\x^\sBSIC$ corresponding to each ordinary and box-constrained linear model according to \eqref{e:OB} and \eqref{e:BB}, respectively.
Finally, the number of events $\x^\sOSIC\neq \hbx$ divided by $10^5$
was computed as the simulated WER for OSIC decoders. Similarly, the number of events $\x^\sBSIC\neq \hbx$ divided by $10^5$
was computed as the simulated WER for BSIC decoders.
The  theoretical WERs are computed from  \eqref{e:pb} and \eqref{e:pbbox} for SIC and BSIC  decoders.

\subsection{Numerical experiments for OSIC decoders}
\label{ss:SimOSIC}
We investigate the OSIC WER to verify the accuracy  of \eqref{e:pb}.
Figure~\ref{fig:SIC} shows the  WER for several  noise standard deviations and for several sizes  $ 2 \leq n \leq 64$.
The results for  $n=64$ are added to   show  the WER of OSIC decoder for large size.
The theoretical and simulated  WERs match very well,  confirming the accuracy of  \eqref{e:pb}.
Theorem \ref{t:propOSIC} states that $P_e^\sOSIC$ increases
	when $\sigma$ or $n$ increases.
Indeed, Figure~\ref{fig:SIC} clearly demonstrates the increasing trend of
	$P_e^\sOSIC$ with noise level $\sigma$.
As size  $n$ increases, $P_e^\sOSIC$ increases slightly and then plateaus.
Although  when noise variance is small, e.g.,  high-SNR region,
$P_e^\sOSIC$ is more or less constant irrespective of size $n$.

\begin{figure}[!htbp]
\centering
\includegraphics[width=3.8 in]{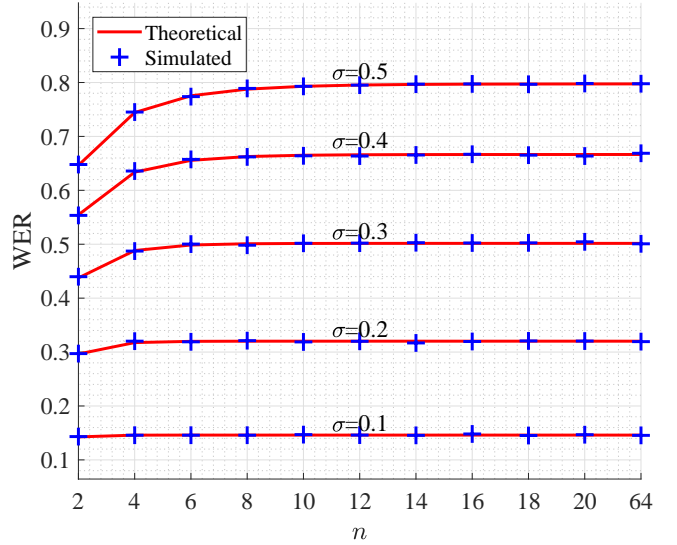}
\caption{Theoretical and simulated WER for OSIC decoders}
\label{fig:SIC}
\end{figure}

\begin{figure}[!htbp]
	\centering
	\includegraphics[width=3.8 in]{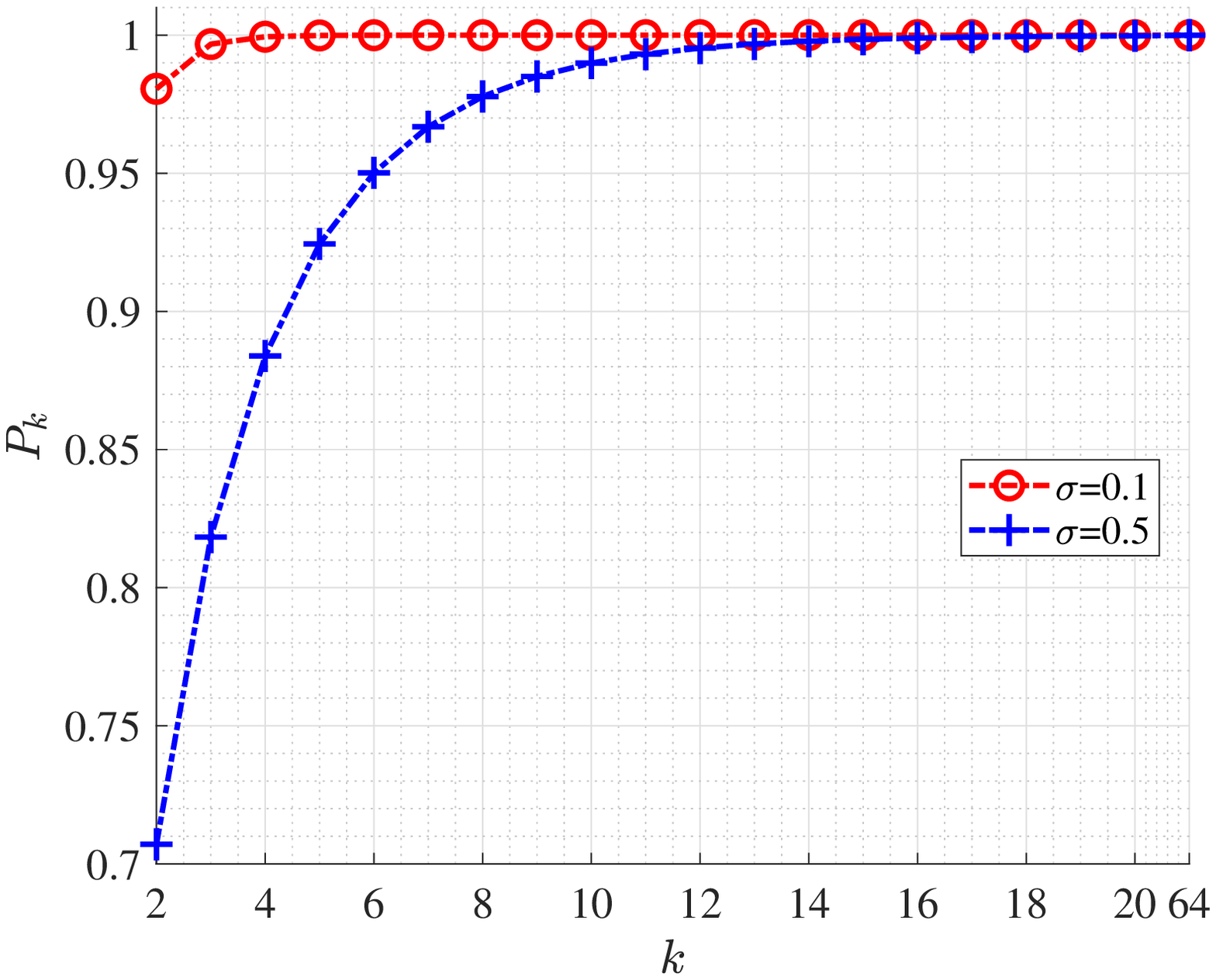}
	\caption{$P_k$ (see \eqref{e:R})} \label{fig:R}
\end{figure}

We may use Theorem \ref{t:propOSIC2} to explain the above phenomena.
The numerical  $P_k$ values  are depicted  in Figure~\ref{fig:R}
	for noise variance of  $0.1$ and $0.5$. For  both cases, $P_k$ converges to 1 as $k$ increases.
Therefore, for given $\sigma$, the performance difference between
	two OSIC detectors respectively with dimensions $n_1$ and $n_2$
	($n_2>n_1$)	is negligible, if $n_1$ is sufficient large.
Intuitively, this phenomenon is because, for OSIC,
	detection error is more likely to occur in early stages
	(see \eqref{e:SuccPron} and \eqref{e:SuccProi},
		and also notice that $P_k$ increases with $k$).
Therefore, given that all previous	stages are correctly detected,
	the probabilities of correct detection of later stages approach 1
	(notice that $P_k$ approaches 1 for sufficient large $k$).
Therefore, 	if $n$ is above a certain threshold,
	further increasing $n$ causes negligible performance deterioration.

\subsection{WER performance of BSIC decoders}
\label{ss:SimBSIC}
Here, we test the accuracy of \eqref{e:pbbox}.
Since in wireless applications  the box $\mathcal{B} $  is generally a hypercube where   $\ell_i$ and  $u_i$ are fixed and the same for $i=1, \ldots, n$. Thus, we choose
$\mathcal{B}=[0,1]^n$, $\mathcal{B}=[0,3]^n$, $\mathcal{B}=[0,7]^n$ and $\mathcal{B}=[0,63]^n$
for testing.

\begin{figure}[!htbp]
\centering
\includegraphics[width=3.8 in]{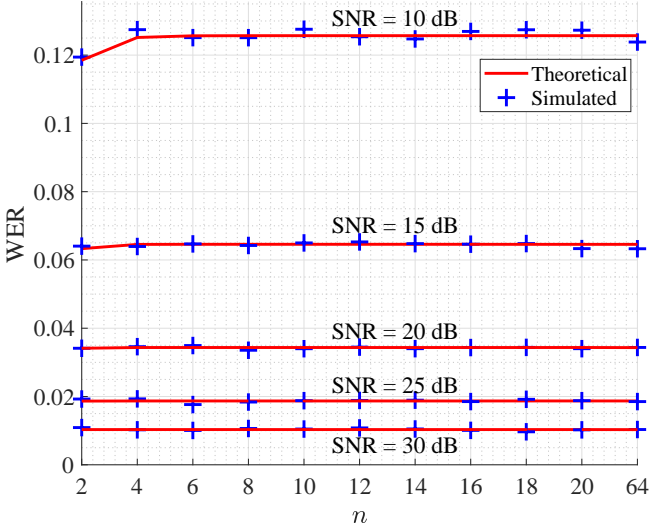}
\caption{Theoretical and simulated WER for BSIC decoders for  $\mathcal{B}=[0,1]^n$ }
\label{fig:BSIC}
\end{figure}

\begin{figure}[!htbp]
\centering
\includegraphics[width=3.8 in]{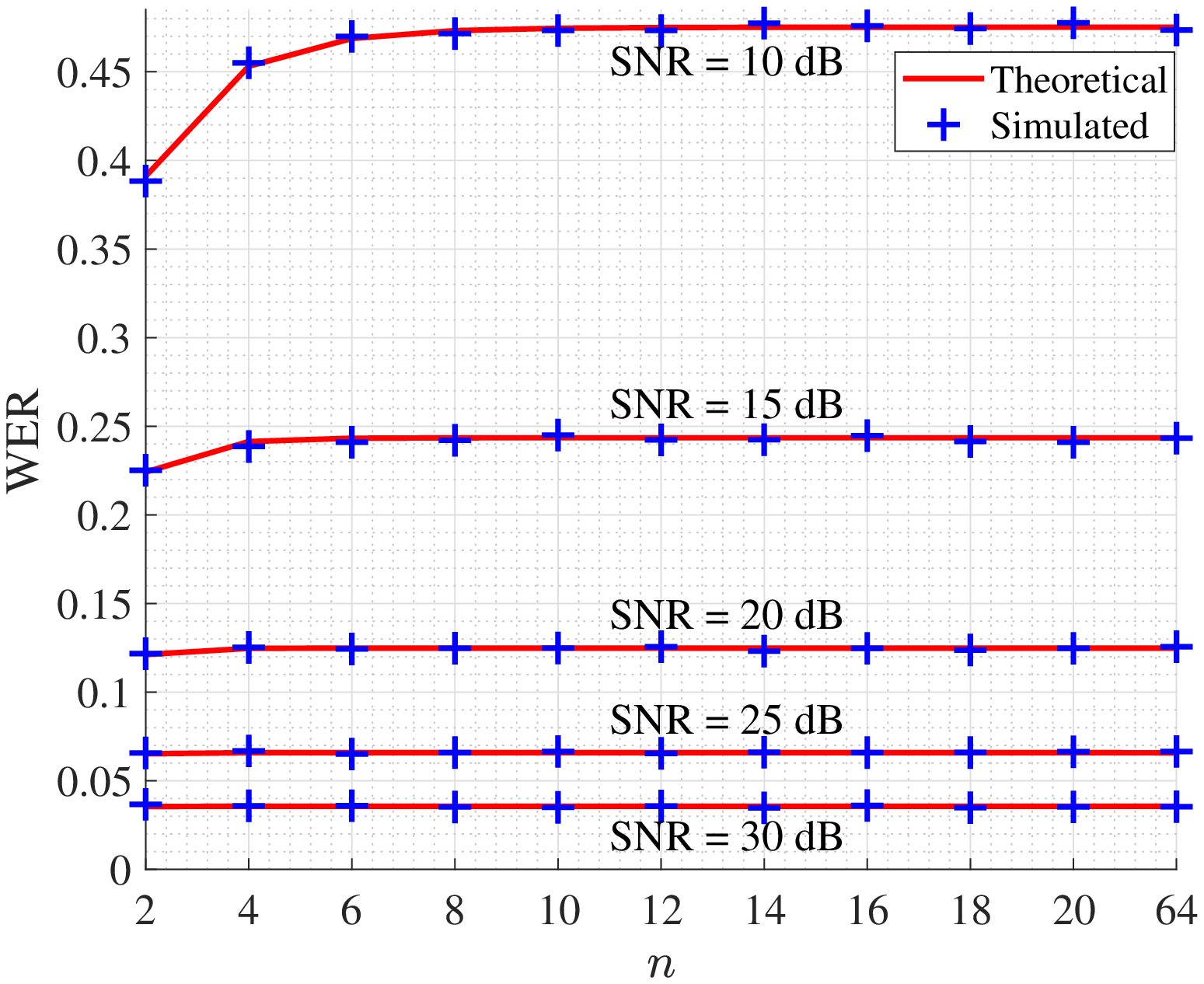}
\caption{Theoretical and simulated WER for BSIC decoders for  $\mathcal{B}=[0,3]^n$ }
\label{fig:BSIC3}
\end{figure}

For a BSIC with $\mathcal{B}=[0,u]^n$  (when $u = 2^q -1$ for some integer $q$),
each entry of $\hbx\in \mathcal{B}=[0,u]^n$
    can be viewed as a $(u+1)$-ary pulse-amplitude modulation (PAM) baseband signal\footnote{
    Strictly speaking, we have $x_i - u/2$ is equivalent to a $(u+1)$-ary baseband signal,
    	since communication signal is generally symmetric to the origin.
    }.
Furthermore, we evaluate BSIC WER in terms of signal-to-noise ratio (SNR),
	which is commonly used in wireless communications.
For a BSIC with $\mathcal{B}=[0,u]^n$, the relationship (see Appendix~\ref{app:SNR} for proof)
	between $\sigma$ and SNR in decibels (dB) is
\[
\mbox{SNR} = 10\log_{10} \frac{\mathbb E ||\hat \x||_2^2 }{n \sigma^2} = 10\log_{10}\frac{u(u+2)}{12\sigma^2}.
\]	
Figures~\ref{fig:BSIC}-\ref{fig:BSIC3} show theoretical and simulated   WER of
BSIC decoders. SNR ranges from 10 to 30 dB.
Each entry of $\hbx$ are randomly selected from 2-PAM and 4-PAM, respectively. Figures~\ref{fig:BSIC}-\ref{fig:BSIC3} show that  theoretical and  simulated  WERs match well which  confirms the accuracy  of \eqref{e:pbbox}.
It can also be observed that when the size  $n$ increases, the WER increases, which matches Theorem \ref{t:propBSIC}.
Similar to the case of OSIC, due to the decreasing error propagation nature of BSIC, the performance deterioration
caused by increasing $n$ vanishes as $n$ exceeds certain threshold (depending on SNR).

\begin{figure}[!htbp]
\centering
\includegraphics[width=3.8 in]{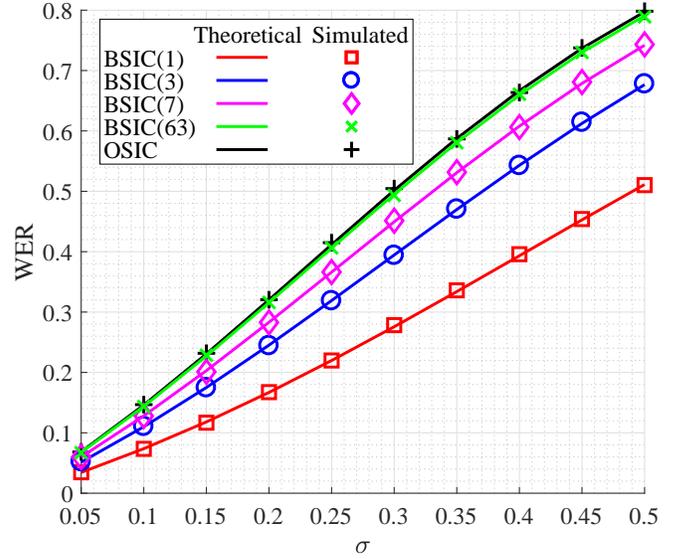}
\caption{Theoretical and simulated WER for OSIC and BSIC decoders}
\label{fig:DiffBox}
\end{figure}

Figure~\ref{fig:DiffBox} investigates WER of BSIC decoders
with  $\mathcal{B}=[0,1]^{20}$, $\mathcal{B}=[0,3]^{20}$,
$\mathcal{B}=[0,7]^{20}$ and $\mathcal{B}=[0,63]^{20}$,
denoted by BSIC(1), BSIC(3), BSIC(7) and BSIC(63), respectively.
For comparison, the OSIC with $n=20$ is also included (denoted as OSIC).
It can be recognized that, for BSIC decoders, increasing the size increases WER.
This observation matches with Theorem~\ref{t:BSICincrease}.
Furthermore, the WER of OSIC decoder exceeds that of BSIC with the same $n$ and $\sigma^2$.
Finally, when $d=63$, $P^{\sBSIC}$ appears to converge to $P^{\sOSIC}$.
Theorem \ref{t:OSICBSIC} predicts these trends.

\begin{figure}[!htbp]
	\centering
	\includegraphics[width=3.8 in]{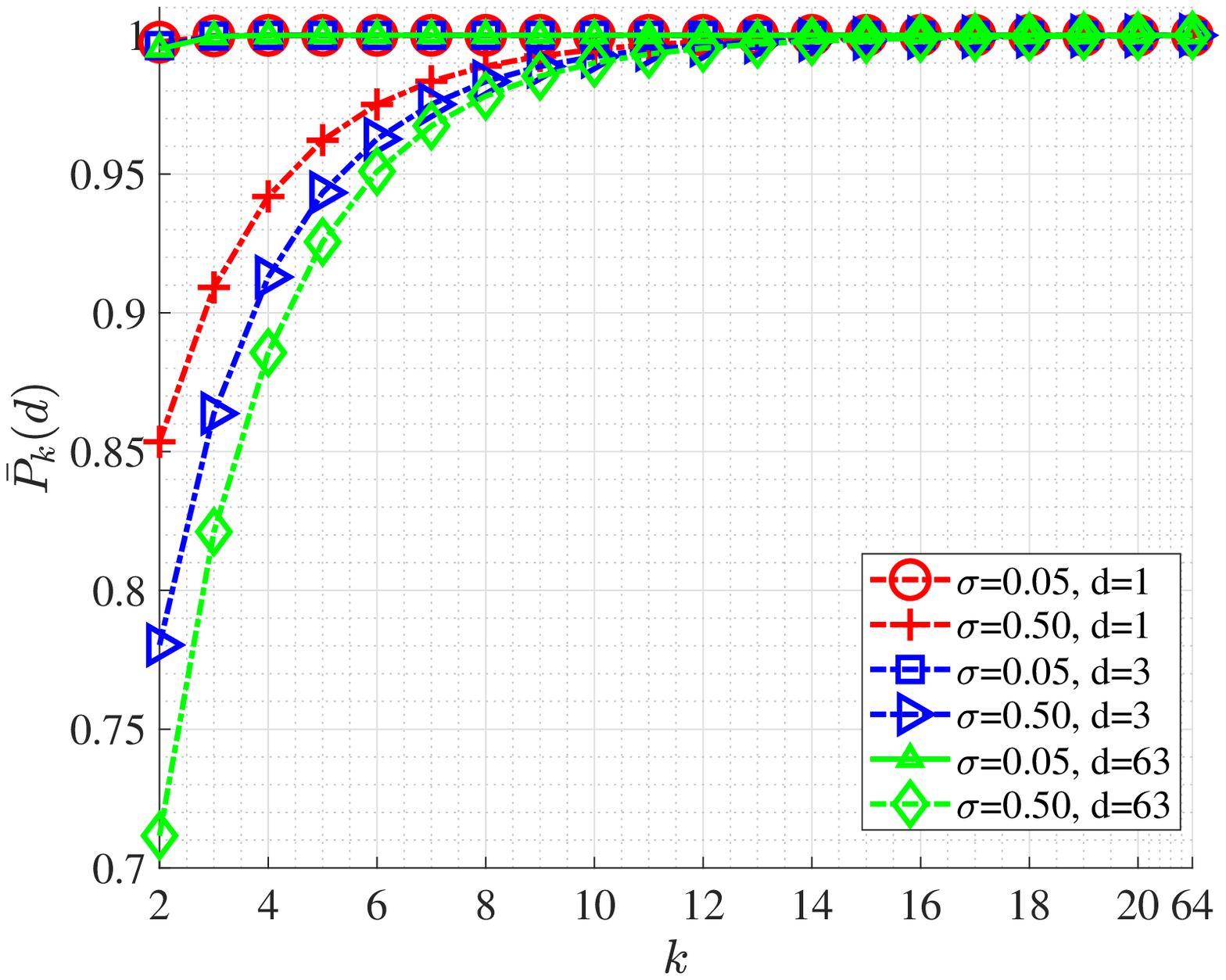}
	\caption{$\bar{P}_k(d)$ (see \eqref{e:Rbar})} \label{fig:RBox}
\end{figure}

To explain the above phenomena, we display $\bar{P}_k(d)$
under two noise levels and edge length $d=\{1,3,63\}$ in Figure~\ref{fig:RBox}.
From Figure~\ref{fig:RBox}, one can see that $\bar{P}_k(d)$ converges to 1
rapidly, especially when $\sigma=0.05$.
Also reminding that, in \eqref{e:SRRbox}, we have
\[
\frac{1-P_e^\sBSIC(n_2,d)}{1-P_e^\sBSIC(n_1,d)}=\prod_{k=n_1+1}^{n_2} \bar{P}_k(d).
\]	
Therefore, we can conclude that $P^{\sBSIC}_e(n,d)$ should change slowly for sufficiently large $n$,
	which explains the $P^{\sBSIC}_e$'s trends along $n$ in Figures~\ref{fig:BSIC}-\ref{fig:BSIC3}.
In addition, one can observe that, for given $k$ and $\sigma$, $\bar{P}_k(d)$ gets smaller when $d$ becomes larger, which is confirmed via Lemma~\ref{l:Rbardecreasing}.
This suggests that decoding performance under larger edge length decreases more with increasing $\sigma$.
Finally, by comparing Figure~\ref{fig:R} with Figure~\ref{fig:RBox},
	it can be seen that $\bar{P}_k(63)$ is very close to $P_k$,
	which is supported via Theorem~\ref{t:RRbar}.
And this explains why $P^{\sBSIC}$ with $d=63$ approaches $P^{\sOSIC}$
	in Figure~\ref{fig:DiffBox}.

\section{Summary and discussions}
\label{s:sum}

In this paper, we have derived closed-form WER expressions  $P_e^\sOSIC$ and $P_e^\sBSIC$   for  OSIC and BSIC decoders,
investigated certain  properties of the expressions  and studied their connections.
The accuracy of these expressions has been verified via simulation and numerical results.

In our model,  the entries of $\A$ are  \iid
standard Gaussian $\mathcal{N}(0,1)$ variables.   The noise vector   $\v$  follows
Gaussian distribution $\mathcal{N}(\boldsymbol{0},\sigma^2 \I)$. This model  can be readily  extended to the complex case, which is important in
 practical applications.  Thus, if  the entries of $\A$ and $\v$ are \iid
complex Gaussian, and $\hbx$ is also assumed to be a complex vector with both of its
real and image parts being uniformly distributed over a box $\mathcal{B}$.  Then just like the real case (see \eqref{e:qr}),  QR factorization of $\A$ yields  $r_{ij}, 1\leq i\leq j\leq n$, are independent,
and $r_{ii}^2\sim\chi^2_{2(m-i+1)}$ and $r_{ij}\sim \mathcal{CN}(0,1)$ for $1\leq i< j\leq n$.
One can easily obtain formulas for $P_e^\sOSIC$  and $P_e^\sBSIC$
under complex $\A$, $\hbx$ and $\v$ by using  the  techniques developed in this paper. Thus, we omit the details.

Theoretical results \cite{ChaWX13} show that the LLL reduction can always
decrease (not strictly) $P_e^\sOSIC$ for deterministic $\A$.
It is straightforward to see that the LLL reduction can also always
decrease (not strictly) $P_e^\sOSIC$ for random $\A$.
Thus, it is important to develop a formula for
$P_e^\sOSIC$ after the LLL reduction is performed on $\A$.
But to do this, we need to find the distribution of the entries of $\bbR$, which is the
LLL reduced matrix of $\R$ (see \eqref{e:qr}).
However, to the best of our knowledge, this is still an open problem due to the complication
of the LLL reduction.

It is well-known that some of the permutation strategies, such as V-BLAST \cite{FosGVW99} and SQRD \cite{WubBRKK01}, can usually decrease $P_e^\sBSIC$ for deterministic $\A$.
This property  also holds for random $\A$.
Thus,  closed-form  $P_e^\sBSIC$ when  $\A$ is column permuted may be useful, which is a potential future research problem.
In addition to these traditional detection strategies, one can also use a naive
lattice decoder \cite{TahK10} to detect $\hbx$ (e.g. perform  traditional lattice decoding and  discard the vectors  not in
the box $\mathcal{B}$ \cite{TahK10}).
The naive lattice decoder performs better for (4)  than for the  ordinary linear model.
Furthermore,  naive lattice decoding achieves maximum diversity \cite{TahMK07}. Since this decoding  is complicated, closed-form analysis of its  WER appears intractable.

On the other hand, although the LLL reduction algorithm  reduces $n$-dimensional lattices, whose basis vectors are integer vectors, in polynomial time of $n$ (see \cite{LenLL82}, \cite{DauV94}),
and the average complexity of reducing an \iid Gaussian  matrix $\A$  is also a polynomial of the column rank of $\A$ (\cite{LinMH13}, \cite{JalSM08}),
the worst-case complexity of LLL is not even finite \cite{JalSM08}.
This suggests a potential use for closed-form  $P_e^\sOSIC$. For instance,  if $P_e^\sOSIC$  is  smaller than a suitable threshold, we may not employ
 LLL reduction; thus, in  practical applications, LLL reduction may be applied adaptively.
Similarly,  closed-form  $P_e^\sBSIC$ can be useful.

Minimum mean square error (MMSE) decoder is a popular  alternative to OSIC and BSIC decoders.
MMSE decoder adapts to the noise level  \cite{WueBKK04}. A closed-form WER  of  MMSE is a potential  future research topic.

\appendices
\section{Proof of Lemma~\ref{l:probk}}\label{s:probkproof}

\begin{proof}
By \eqref{e:modelk},
\[
x=\lfloor\bar{y}/r\rceil=\lfloor\hat{x}+\bar{v}/r\rceil
=\hat{x}+\lfloor\bar{v}/r\rceil,
\]
thus, $x=\hat{x}$ if and only if $|\bar{v}/r|\leq 1/2$.

Let $X=\bar{v}^2$, $Y=r^2$ and  $U=X/Y$. Thus,  $x=\hat{x}$ if and only if $U\leq 1/4$.
Thus, to show \eqref{e:R}, we derive $\Pr\left(U\leq 1/4\right)$.  Note that $U$ is the  ratio of  two independent  central chi-square  random variables. The distribution of this  ratio   is well-known \cite[Section 27]{johnson}. That is,  $ U=  \frac {\sigma^2} k  \frac {\chi_1^2 }{\chi_k^2/k} =  \frac {\sigma^2} k F_{1,k} $ where $F_{1,k}$ an $F$ distributed rv.  Thus, the PDF of $F_{1,k}$ is given by
\[ f_{1,k}(x) = \frac{\Gamma(\frac{1+k}{2})k^{k/2}}{\Gamma(\frac{1}{2})\Gamma(\frac{k}{2})} \frac{x^{-1/2}}{(k+x)^{(k+1)/2}}, \quad x \geq 0. \] Therefore, we find
\begin{align}
\label{e:p-k2}
\Pr(U\leq \frac{1}{4})=&\int_0^{k/4\sigma^2 } f_{1,k}(x)dx \nonumber\\
= & \int_0^{k/4\sigma^2 } \frac{\Gamma(\frac{1+k}{2})k^{k/2}}{\Gamma(\frac{1}{2})\Gamma(\frac{k}{2})} \frac{x^{-1/2}}{(k+x)^{(k+1)/2}} dx  \nonumber\\
=&C_k\int_0^{\arctan(1/(2\sigma))}\cos^{k-1}(\theta)d\theta,
\end{align}
where the last equality follows from the substitution  $x=k\tan^2(\theta)$.
Thus, the lemma holds.
\end{proof}

\section{Proof of Lemma~\ref{l:probkbox2}}
\label{s:probkproofbox2}

\begin{proof}
Since $\hat{x}$ is uniformly distributed on $[\ell,u]$, we have
\begin{align}
\label{e:pksum}
&\Pr(x=\hat{x})\nonumber\\
=&\Pr((x=\hat{x})\cap(\hat{x}=\ell))+\Pr((x=\hat{x})\cap(\hat{x}=u))\nonumber\\
&+\Pr((x=\hat{x})\cap(\ell<\hat{x}<u))\nonumber\\
=&\Pr(x=\hat{x}|\hat{x}=\ell)\Pr(\hat{x}=\ell)\nonumber\\
&+\Pr(x=\hat{x}|\hat{x}=u)\Pr(\hat{x}=u)\nonumber\\
&+\Pr(x=\hat{x}|\ell<\hat{x}<u)\Pr(\ell<\hat{x}<u)\nonumber\\
=&\frac{1}{u-\ell+1}[\Pr(x=\hat{x}|\hat{x}=\ell)+\Pr(x=\hat{x}|\hat{x}=u)\nonumber\\
&\quad\quad\quad\quad+(u-\ell-1)\Pr(x=\hat{x}|\ell<\hat{x}<u)].
\end{align}
In the following, we derive formulas for
\[
\Pr(x=\hat{x}|\hat{x}=\ell),\,\Pr(x=\hat{x}|\hat{x}=u)\;\text{and} \,\Pr(x=\hat{x}|\ell<\hat{x}<u).
\]

Let $W=\bar{v}/r$, then by \eqref{e:modelk},
$
\lfloor\bar{y}/r\rceil=\lfloor\hat{x}+\bar{v}/r\rceil=\hat{x}+\lfloor W\rceil.
$
From \eqref{e:BBk}, we can see that
\beqnn
\begin{split}
 x=
\begin{cases}
\ell, & \mbox{ if }\   \hat{x}+\lfloor W\rceil\leq \ell\\
\hat{x}+\lfloor W\rceil, & \mbox{ if }\    \ell<\hat{x}+\lfloor W\rceil< u\\
u, & \mbox{ if }\    \hat{x}+\lfloor W\rceil\geq u
\end{cases}.
\end{split}
\eeqnn
Thus, $x=\hat{x}$ if and only if
\beqnn
\begin{split}
 W\in
\begin{cases}
(-\infty,1/2], & \mbox{ if }\   \hat{x}=\ell\\
[-1/2,1/2], & \mbox{ if }\    \ell<\hat{x}<u\\
[-1/2,+\infty), & \mbox{ if }\    \hat{x}=u
\end{cases}.
\end{split}
\eeqnn

We first show how to compute $\Pr(x=\hat{x}|\hat{x}=\ell)$.
Since $\bar{\v}$ and $r^2$ are independent, by the distribution of $\bar{v}$ and $r^2$, we can see that
the PDF of $W$ is symmetric with $x=0$. Thus,
\begin{align}
&\Pr(x=\hat{x}|\hat{x}=\ell)\nonumber\\
=&\Pr(W\leq1/2)\nonumber\\
=&\Pr(W<0)+\Pr(0\leq W\leq1/2)\nonumber\\
=&\frac{1}{2}\big(1+\Pr(-1/2\leq W\leq1/2)\big)\nonumber\\
\overset{(a)}{=}&\frac{1}{2}\big(1+\Pr(U\leq1/4)\big)\nonumber\\
\overset{(b)}{=}&\frac{1}{2}\left(1+
C_k\int_{0}^{\arctan(1/2\sigma)}\cos^{k-1}(\theta)d\theta\right)\nonumber\\
\overset{(c)}{=}&\frac{1}{2}\left(C_k\int_{0}^{\pi/2}\cos^{k-1}(\theta)d\theta\right.\nonumber\\
&\quad\left.+C_k\int_{0}^{\arctan(1/2\sigma)}\cos^{k-1}(\theta)d\theta\right)\nonumber\\
=&\frac{C_k}{2}\left(\int^{0}_{-\pi/2}\cos^{k-1}(\theta)d\theta
+\int_{0}^{\arctan(1/2\sigma)}\cos^{k-1}(\theta)d\theta\right)\nonumber\\
=&\frac{C_k}{2}\int_{-\pi/2}^{\arctan(1/2\sigma)}\cos^{k-1}(\theta)d\theta,
\label{e:Case I}
\end{align}
where $(a)$ is because $U=W^2$, (b) follows from \eqref{e:p-k2}
and (c) is from \eqref{e:integralk}.

Similarly, we have
\begin{align}
\label{e:Case II}
&\Pr(x=\hat{x}|\ell<\hat{x}<u)\nonumber\\
=&\Pr(-1/2\leq W\leq1/2)=\Pr(U\leq1/4)\nonumber\\
=&C_k\int_{0}^{\arctan(1/2\sigma)}\cos^{k-1}(\theta)d\theta.
\end{align}

Since the PDF of $W$ is symmetric with $x=0$, we have
\begin{align}
\Pr(x=\hat{x}|\hat{x}=u)
=&\Pr(W\geq-1/2)=\Pr(W\leq1/2)\nonumber\\
=&\frac{C_k}{2}\int_{-\pi/2}^{\arctan(1/2\sigma)}\cos^{k-1}(\theta)d\theta.
\label{e:Case III}
\end{align}

Then, by \eqref{e:pksum}-\eqref{e:Case III}, we have

\begin{align*}
&\Pr(x=\hat{x})\nonumber\\
=&\frac{C_k}{u-\ell+1}\left(\int_{-\pi/2}^{\arctan(1/2\sigma)}\cos^{k-1}(\theta)d\theta\right.
\nonumber\\
&\quad+\left.(u-\ell-1)\int_{0}^{\arctan(1/2\sigma)}\cos^{k-1}(\theta)d\theta\right)\nonumber\\
=&\frac{C_k}{u-\ell+1}\left(\int_{-\pi/2}^{0}\cos^{k-1}(\theta)d\theta\right.
\nonumber\\
&\quad+\left.(u-\ell)\int_{0}^{\arctan(1/2\sigma)}\cos^{k-1}(\theta)d\theta\right)\nonumber\\
=&\frac{C_k}{u-\ell+1}\left(\frac{1}{C_k}
+(u-\ell)\int_{0}^{\arctan(1/2\sigma)}\cos^{k-1}(\theta)d\theta\right),
\end{align*}
where the last equality is from \eqref{e:integralk}.
Thus, by \eqref{e:Rbar}, eq. \eqref{e:pkbox2} holds.
\end{proof}

\section{Derivation of SNR} \label{app:SNR}
In the following, we give the relationship between SNR in dB and
$\sigma$ for the case that $\hat{\x}$ is uniformly distributed in a
box $\mathcal{B} = [0,u]^n$ ($u=2^q-1$ for some integer $q$), which is transformed from an $n$-dimensional $(u+1)$-ary PAM.
Specifically, for any signal $\bar{\x}$ in an $n$-dimensional $(u+1)$-ary PAM, i.e., $\bar{x}_i\in\{-\frac{u}{2},-\frac{u-2}{2},\cdots,\frac{u-2}{2},\frac{u}{2}\}$, we let
$\hat{\x}=\bar{\x}+u/2\e$, where $\e$ is an $n$-dimensional vector
 with all of its entries being 1, then $\hat{\x}\in \mathcal{B} = [0,u]^n$.

Since $\mathcal{B} = [0,u]^n$ is transformed from an $n$-dimensional $(u+1)$-ary PAM, we calculate $\mathbb E \|\bar{\x}\|_2^2$ over the
$n$-dimensional $(u+1)$-ary PAM instead of $\mathbb E \|\hat{\x}\|_2^2$ over $\mathcal{B}$.
Since each entry of $\bar{\x}$ belongs to a $(u+1)$-ary PAM,
there are $(u+1)^n$ number of different $\bar{\x}$, and hence
\beq
\label{e:Ex}
\mathbb E \|\bar{\x}\|_2^2 =\frac{1}{(u+1)^n} \sum_{\bar{\x}\in \,\, n-\mbox{dimensional} \, \;(u+1)-\mbox{ary PAM}}
\|\bar{\x}\|_2^2.
\eeq
Each $\bar{\x}$ has $n$ entries, so the total number of
entries of all the different $\bar{\x}'s$ are $n(u+1)^n$.
Since $\bar{\x}$ is uniformly distributed over $n$-dimensional $(u+1)$-ary PAM, each entry of $\bar{\x}$ is also
uniformly distributed over $(u+1)$-ary PAM, which implies that
each point in the $(u+1)$-ary PAM are chosen
\[
\frac{n(u+1)^n}{u+1}=n(u+1)^{n-1}
\]
times.
Therefore,
\begin{align*}
&\sum_{\bar{\x}\in n-\mbox{dimensional} \, \;(u+1)-\mbox{ary PAM}}
\|\bar{\x}\|_2^2\\
=&n(u+1)^{n-1}\\
&\times\left[ (-\frac{u}{2})^2 + (-\frac{u-2}{2})^2+\cdots + (\frac{u-2}{2})^2+ (\frac{u}{2})^2 \right]\\
=&\frac{ n (u+1)^{n-1}(u+1)((u+1)^2-1) }{12 }\\
=&\frac{ n (u+1)^{n}((u+1)^2-1) }{12 }=\frac{ n (u+1)^{n}u(u+2) }{12 }.
\end{align*}
Then by \eqref{e:Ex}, we have
\[
\mathbb E \|\bar{\x}\|_2^2 =\frac{n u(u+2)}{12}.
\]
Therefore, we SNR in dB satisfies
\begin{align*}
\text{SNR} =10 \log_{10}\frac{\|\bar{\x}\|_2^2}{n\sigma^2}
=  10 \log_{10} \frac{  u(u+2) }{12 \sigma^2}.
\end{align*}

\bibliographystyle{IEEEtran}
\bibliography{ref}

\begin{thebibliography}{10}
\providecommand{\url}[1]{#1}
\csname url@samestyle\endcsname
\providecommand{\newblock}{\relax}
\providecommand{\bibinfo}[2]{#2}
\providecommand{\BIBentrySTDinterwordspacing}{\spaceskip=0pt\relax}
\providecommand{\BIBentryALTinterwordstretchfactor}{4}
\providecommand{\BIBentryALTinterwordspacing}{\spaceskip=\fontdimen2\font plus
\BIBentryALTinterwordstretchfactor\fontdimen3\font minus
  \fontdimen4\font\relax}
\providecommand{\BIBforeignlanguage}[2]{{%
\expandafter\ifx\csname l@#1\endcsname\relax
\typeout{** WARNING: IEEEtran.bst: No hyphenation pattern has been}%
\typeout{** loaded for the language `#1'. Using the pattern for}%
\typeout{** the default language instead.}%
\else
\language=\csname l@#1\endcsname
\fi
#2}}
\providecommand{\BIBdecl}{\relax}
\BIBdecl

\bibitem{HasB98}
A.~Hassibi and S.~Boyd, ``Integer parameter estimation in linear models with
  applications to {GPS},'' \emph{IEEE Trans. Signal Process.}, vol.~46, no.~11,
  pp. 2938--2952, Nov. 1998.

\bibitem{HasV05}
B.~Hassibi and H.~Vikalo, ``On the sphere-decoding algorithm {I}. {E}xpected
  complexity,'' \emph{IEEE Trans. Signal Process.}, vol.~53, no.~8, pp.
  2806--2818, Aug. 2005.

\bibitem{AgrEVZ02}
E.~Agrell, T.~Eriksson, A.~Vardy, and K.~Zeger, ``Closest point search in
  lattices,'' \emph{IEEE Trans. Inf. Theory}, vol.~48, no.~8, pp. 2201--2214,
  Aug. 2002.

\bibitem{DamGC03}
M.~O. Damen, H.~E. Gamal, and G.~Caire, ``On maximum likelihood detection and
  the search for the closest lattice point,'' \emph{IEEE Trans. Inf. Theory},
  vol.~49, no.~10, pp. 2389--2402, Oct. 2003.

\bibitem{SuW05}
K.~Su and I.~J. Wassell, ``A new ordering for efficient sphere decoding,'' in
  \emph{Proc. IEEE Int. Conf. Commun. (ICC)}, May 2005, pp. 1906--1910.

\bibitem{ChaH08}
X.-W. Chang and Q.~Han, ``Solving box-constrained integer least squares
  problems,'' \emph{IEEE Trans. Wireless Commun.}, vol.~7, no.~1, pp. 277--287,
  Jan. 2008.

\bibitem{BreC11}
S.~Breen and X.~Chang, ``Column reording for box-constrained integer least
  squares problems,'' in \emph{Proc. IEEE Global Commun. Conf. (Globecom)},
  Dec. 2011, pp. 1--6.

\bibitem{ParKL16}
J.~Park, J.~Kim, and D.~J. Love, ``Antenna reliability ordering technique for
  unequal error protection in jointly detected {MIMO} systems,'' \emph{IEEE
  Trans. Veh. Technol.}, vol.~65, no.~9, pp. 7136--7148, Sept. 2016.

\bibitem{JalO05}
J.~Jald{\'e}n and B.~Ottersten, ``On the complexity of sphere decoding in
  digital communications,'' \emph{IEEE Trans. Signal Process.}, vol.~53, no.~4,
  pp. 1474--1484, April 2005.

\bibitem{WenC17}
J.~Wen and X.-W. Chang, ``The success probability of the {B}abai point
  estimator and the integer least squares estimator in box-constrained integer
  linear models,'' \emph{IEEE Trans. Inf. Theory}, vol.~63, no.~1, pp.
  631--648, Jan. 2017.

\bibitem{LenLL82}
A.~Lenstra, H.~Lenstra, and L.~Lov{\'a}sz, ``Factoring polynomials with
  rational coefficients,'' \emph{Math. Ann.}, vol. 261, no.~4, pp. 515--534,
  1982.

\bibitem{FosGVW99}
G.~J. Foscini, G.~D. Golden, R.~A. Valenzuela, and P.~W. Wolniansky,
  ``Simplified processing for high spectral efficiency wireless communication
  employing multi-element arrays,'' \emph{IEEE J. Sel. Areas Commun.}, vol.~17,
  no.~11, pp. 1841--1852, Nov. 1999.

\bibitem{WubBRKK01}
D.~W\"{u}bben, R.~Bohnke, J.~Rinas, V.~Kuhn, and K.~Kammeyer, ``Efficient
  algorithm for decoding layered space-time codes,'' \emph{Electron. Lett.},
  vol.~37, no.~22, pp. 1348--1350, Oct. 2001.

\bibitem{SchE94}
C.~Schnorr and M.~Euchner, ``Lattice basis reduction: improved practical
  algorithms and solving subset sum problems,'' \emph{Math Program}, vol.~66,
  no. 1-3, pp. 181--191, Aug. 1994.

\bibitem{CuiT05}
T.~Cui and C.~Tellambura, ``Approximate {ML} detection for {MIMO} systems using
  multistage sphere decoding,'' \emph{IEEE Signal Process. Lett.}, vol.~12,
  no.~3, Mar. 2005.

\bibitem{CuiT05b}
------, ``An efficient generalized sphere decoder for rank-deficient {MIMO}
  systems,'' \emph{IEEE Commun. Lett.}, vol.~9, no.~5, pp. 423--425, May 2005.

\bibitem{GhaT08}
A.~Ghaderipoor and C.~Tellambura, ``A statistical pruning strategy for
  {Schnorr-Euchner} sphere decoding,'' \emph{IEEE Wireless Commun. Lett.},
  vol.~12, no.~2, pp. 121--123, Feb. 2008.

\bibitem{CuiHT13}
T.~Cui, S.~Han, and C.~Tellambura, ``Probability-distribution-based node
  pruning for sphere decoding,'' \emph{IEEE Trans. Veh. Technol.}, vol.~62,
  no.~4, pp. 1586--1596, May 2013.

\bibitem{WenZMC16}
J.~Wen, B.~Zhou, W.~H. Mow, and X.-W. Chang, ``An efficient algorithm for
  optimally solving a shortest vector problem in compute-and-forward design,''
  \emph{IEEE Trans. Wireless Commun.}, vol.~15, no.~10, pp. 6541--6555, Oct.
  2016.

\bibitem{Mic01}
D.~Micciancio, ``The hardness of the closest vector problem with
  preprocessing,'' \emph{IEEE Trans. Inf. Theory}, vol.~47, no.~3, pp.
  1212--1215, Mar. 2001.

\bibitem{Bab86}
L.~Babai, ``On {Lov\'asz'} lattice reduction and the nearest lattice point
  problem,'' \emph{Combinatorica}, vol.~6, no.~1, pp. 1--13, 1986.

\bibitem{ChaWX13}
X.-W. Chang, J.~Wen, and X.~Xie, ``Effects of the {LLL} reduction on the
  success probability of the {Babai} point and on the complexity of sphere
  decoding,'' \emph{IEEE Trans. Inf. Theory}, vol.~59, no.~8, pp. 4915--4926,
  Aug. 2013.

\bibitem{PraV04}
N.~Prasad and M.~K. Varanasi, ``Analysis of decision feedback detection for
  {MIMO} {Rayleigh-fading} channels and the optimization of power and rate
  allocations,'' \emph{IEEE Trans. Inf. Theory}, vol.~50, no.~6, pp.
  1009--1025, June 2004.

\bibitem{ZanCW05}
A.~Zanella, M.~Chiani, and M.~Z. Win, ``{MMSE} reception and successive
  interference cancellation for {MIMO} systems with high spectral efficiency,''
  \emph{IEEE Trans. Wireless Commun.}, vol.~4, no.~3, pp. 1244--1253, May 2005.

\bibitem{LiuK11}
P.~Liu and I.~M. Kim, ``Exact and closed-form error performance analysis for
  hard {MMSE-SIC} detection in {MIMO} systems,'' \emph{IEEE Trans. Commun.},
  vol.~59, no.~9, pp. 2463--2477, Sept. 2011.

\bibitem{MenK15}
S.~Menon and S.~Kalyani, ``{SER} for optimal combining in the presence of
  multiple correlated co-channel interferers,'' \emph{IEEE Commun. Lett.},
  vol.~19, no.~11, pp. 2033--2036, Nov. 2015.

\bibitem{SonYCA15}
X.~Song, F.~Yang, J.~Cheng, and M.~S. Alouini, ``Asymptotic {SER} performance
  comparison of {MPSK} and {MDPSK} in wireless fading channels,'' \emph{IEEE
  Wireless Commun. Lett.}, vol.~4, no.~1, pp. 18--21, Feb. 2015.

\bibitem{QianM17}
B.~Qian and W.~H. Mow, ``A near {BER}-optimal decoding algorithm for
  convolutionally coded relay channels with the decode-and-forward protocol,''
  \emph{IEEE Trans. Wireless Commun.}, vol.~16, no.~3, pp. 1767--1781, Mar.
  2017.

\bibitem{BaoMKXZ17}
J.~Bao, Z.~Ma, G.~K. Karagiannidis, M.~Xiao, and Z.~Zhu, ``Joint multiuser
  detection of multidimensional constellations over fading channels,''
  \emph{IEEE Trans. Wireless Commun.}, vol.~65, no.~1, pp. 161--172, Jan 2017.

\bibitem{WenTB16}
J.~Wen, C.~Tong, and S.~Bai, ``Effects of some lattice reductions on the
  success probability of the zero-forcing decoder,'' \emph{IEEE Commun. Lett.},
  vol.~20, no.~10, pp. 2031--2034, Oct. 2016.

\bibitem{WenCT17}
J.~Wen, X.-W. Chang, and C.~Tellambura, ``On the success probability of the
  box-constrained rounding and {Babai} detectors,'' in \emph{Proc. IEEE Int.
  Symp. Inf. Theory (ISIT)}, June 2017, pp. 526--530.

\bibitem{WenWT17}
J.~Wen, K.~Wu, and C.~Tellambura, ``A closed-form symbol error rate analysis
  for successive interference cancellation decoders,'' in \emph{Proc. IEEE Int.
  Conf. Commun. (ICC)}, May 2017, pp. 1--6.

\bibitem{SimA00}
M.~K. Simon and M.~S. Alouini, \emph{Digital Communications Over Fading
  Channels}.\hskip 1em plus 0.5em minus 0.4em\relax Wiley, 2000.

\bibitem{GolV13}
G.~Golub and C.~Van~Loan, ``Matrix computations, 4th,'' \emph{Johns Hopkins},
  2013.

\bibitem{Mui82}
R.~I. Muirhead, \emph{Aspects of Multivariate Statistical Theory}.\hskip 1em
  plus 0.5em minus 0.4em\relax New York: Wiley, 1982.

\bibitem{SonC12}
X.~Song and J.~Cheng, ``Optical communication using subcarrier intensity
  modulation in strong atmospheric turbulence,'' \emph{IEEE/OSA J. Lightw.
  Technol.}, vol.~30, no.~22, pp. 3484--3493, Nov. 2012.

\bibitem{TahK10}
M.~Taherzadeh and A.~K. Khandani, ``On the limitations of the naive lattice
  decoding,'' \emph{IEEE Trans. Inf. Theory}, vol.~56, no.~10, pp. 4820--4826,
  Oct. 2010.

\bibitem{TahMK07}
M.~Taherzadeh, A.~Mobasher, and A.~K. Khandani, ``{LLL} reduction achieves the
  receive diversity in {MIMO} decoding,'' \emph{IEEE Trans. Inf. Theory},
  vol.~53, no.~12, pp. 4801--4805, Dec 2007.

\bibitem{DauV94}
H.~Daud{\'e} and B.~Vall{\'e}e, ``An upper bound on the average number of
  iterations of the {LLL} algorithm,'' \emph{Theor. Comput. Sci.}, vol. 123,
  no.~1, pp. 95--115, 1994.

\bibitem{LinMH13}
C.~Ling, W.~Mow, and N.~Howgrave-Graham, ``Reduced and fixed-complexity
  variants of the {LLL} algorithm for communications,'' \emph{IEEE Trans.
  Commun.}, vol.~61, no.~3, pp. 1040--1050, Mar. 2013.

\bibitem{JalSM08}
J.~Jald{\'e}n, D.~Seethaler, and G.~Matz, ``Worst-and average-case complexity
  of {LLL} lattice reduction in {MIMO} wireless systems,'' in \emph{Proc. IEEE
  Int. Conf. Acoust. Speech Signal Process. (ICASSP)}, Mar. 2008, pp.
  2685--2688.

\bibitem{WueBKK04}
D.~Wuebben, R.~Boehnke, V.~Kuehn, and K.~D. Kammeyer, ``Near-maximum-likelihood
  detection of {MIMO} systems using {MMSE}-based lattice reduction,'' in
  \emph{Proc. IEEE Int. Conf. Commun. (ICC)}, June 2004, pp. 798--802.

\bibitem{johnson}
N.~L. Johnson, S.~Kotz, and N.~Balakrishnan, \emph{Continuous Univariate
  Distributions}, 2nd~ed.\hskip 1em plus 0.5em minus 0.4em\relax Wiley, 1995,
  vol.~2.

\end{thebibliography}

\end{document}